\documentclass[draft,onecolumn]{IEEEtran}
\IEEEoverridecommandlockouts
\usepackage{cite}
\usepackage{amsmath,amssymb,amsfonts,amsthm}
 \usepackage{algorithm}
 \usepackage{algpseudocode}
\usepackage{graphicx}
\usepackage{textcomp}
\usepackage{xcolor}
\usepackage{setspace} 
\usepackage{tikz}
\usepackage{url}
\usetikzlibrary{positioning}
\usetikzlibrary{calc,positioning}
 \usetikzlibrary{arrows.meta,positioning}

\usepackage{blkarray}
\usepackage{comment}

\begin{document}

\newtheorem{theorem}{Theorem}
\newtheorem{definition}{Definition}
\newtheorem{corollary}{Corollary}
\newtheorem{proposition}{Proposition}
\newtheorem{lemma}{Lemma}
\newtheorem{example}{Example}
\newtheorem{remark}{Remark}
\newtheorem{construction}{Construction}
\newcommand{\F}{\mathbb{F}}
\newcommand{\wt}{\mathrm{wt}}
\newcommand{\Gopt}{G_{\mathrm{opt}}}

\newif\ifarxiv \arxivtrue

\ifarxiv
  \newcommand{\arxiv}[1]{#1}\newcommand{\conf}[1]{}
\else
  \newcommand{\arxiv}[1]{}\newcommand{\conf}[1]{#1}
\fi

\title{On Optimal Encodings and Systematicity in Function-Correcting Codes}

\author{\IEEEauthorblockN{Charul Rajput\IEEEauthorrefmark{1}, Kanchana Lokshmii Jagatti\IEEEauthorrefmark{2}, B. Sundar Rajan\IEEEauthorrefmark{2}}

\IEEEauthorblockA{\IEEEauthorrefmark{1}University of Bristol Mumbai Enterprise Campus, Powai, Mumbai, India \\
Email: charul.rajput@mumbai.bristol.ac.uk}
\linebreak
\IEEEauthorblockA{\IEEEauthorrefmark{2}Department of Electrical Communication Engineering,
Indian Institute of Science, Bengaluru, India \\
Emails:  kanchanal@iisc.ac.in, bsrajan@iisc.ac.in}}

%

\maketitle

\begin{abstract}
Function-correcting codes (FCCs) protect the value of a function $f$ of the message against $t$ errors. In the original formulation of Lenz et al. (IEEE Trans. Inf. Theory, 2023), the encoding is systematic, and the lower bound of $2t$ on the redundancy relies on this form. Recently, D. Ho (arXiv, 2026) showed, for linear functions and linear encodings, that systematicity can cost redundancy. We begin with the OR function to show that the cost is not confined to the linear setting: a non-systematic encoding attains redundancy $1$ while every systematic encoding needs $2t$. For a linear function $f$ and a fixed linear code $C$, let $d_f$ denote the minimum distance between codewords of messages with different function values. Different generator matrices of $C$ assign different codewords to the messages and can give different values of $d_f$. We study which generator matrix of $C$ gives the largest $d_f$. We give an algorithm that constructs an optimal generator matrix and determines the optimal value as the weight of a codeword in a greedy basis of $C$. We then characterize, in terms of information sets, when a generator matrix in systematic form attains this optimum, and give a necessary condition that is checked on the low-weight codewords of $C$ alone. \arxiv{ For two-valued functions, we drop both linearity and systematicity. Using the minimality of initial segments of the simplicial order with respect to Hamming neighbourhoods, we show that a non-systematic $(f,t)$-FCC of length $n$ exists if and only if a condition depending on $f$ only through the size of its smaller preimage holds. For $t=1$, redundancy $1$ is sufficient for every nonconstant two-valued function on $\mathbb{F}_2^k$ with $k\ge 10$, and we list the cases for $k\le 9$. For general $t$, redundancy $1$ is sufficient once $k\ge\lceil\frac83(2t+1)^2\rceil-1$, and for each redundancy $s<2t$ we define the least $k$ beyond which redundancy $s$ is sufficient for all such functions and give an upper bound on it.}
\end{abstract}

\begin{IEEEkeywords}
Function-correcting codes, systematic encoding, generator matrix, unequal error protection, redundancy
\end{IEEEkeywords}
%




\section{Introduction}
\label{intro}

Function-correcting codes (FCCs) were introduced by Lenz \emph{et al.}~\cite{LBWY2023} for settings in which the message is transmitted and the value of a function $f$ of the message must be protected against errors. In their formulation, a message $u\in\F_2^k$ is encoded systematically as $\mathrm{Enc}(u)=(u,p(u))$ with $p(u)\in\F_2^r$, and the encoding is an $(f,t)$-FCC if
\begin{equation}\label{eq:fcc}
f(u)\neq f(v)\implies d\big(\mathrm{Enc}(u),\mathrm{Enc}(v)\big)\ge 2t+1
\end{equation}
for all messages $u,v$. Condition~\eqref{eq:fcc} protects only the function value against $t$ errors: two messages with the same function value may be encoded at distance one. The systematic form makes the encoding one-to-one, so that every message can be identified from its codeword when no error occurs, but it gives no protection to the message itself. The systematic formulation has been retained in subsequent work on bounds and constructions~\cite{PR2025,LS2025,GXZZ2025}. In~\cite{RRHH2025}, the message is also protected, through an additional distance requirement between all pairs of codewords. The notation used throughout the paper is collected at the end of this section.

Since no protection of the message is required in~\eqref{eq:fcc}, the systematic form serves only to make the encoding one-to-one, and any one-to-one encoding would serve this purpose. This raises the question of whether the systematic form costs redundancy. The following example shows that it can.

\begin{example}\label{ex:or}
Let $f:\F_2^k\to\F_2$ be the OR function, that is, $f(0)=0$ and $f(u)=1$ for $u\neq 0$, and let $t\ge 1$. For a systematic encoding, the messages $0$ and $e_1$ have different function values and their systematic parts are at distance one, so $d(p(0),p(e_1))\ge 2t$ and $r\ge 2t$. The parity map $p(0)=0^{2t}$, $p(u)=1^{2t}$ for $u\neq 0$, where $0^{2t}$ and $1^{2t}$ denote the all-zero and all-one vectors of length $2t$, attains $r=2t$.

Now let $\mathrm{Enc}:\F_2^k\to\F_2^n$ be any one-to-one encoding. Condition~\eqref{eq:fcc} only requires the $2^k-1$ codewords of nonzero messages to lie at distance at least $2t+1$ from $\mathrm{Enc}(0)$. Taking $\mathrm{Enc}(0)=0$ without loss of generality, such an encoding exists if and only if we have sufficient vectors of weight at least $2t+1$ in $\mathbb{F}_2^n$, i.e.,
\[
2^n-V(n,2t)\ge 2^k-1.
\]
For $k=4$ and $t=1$, this holds for $n=5$ but not for $n=4$. An optimal encoding maps $0000\mapsto 00000$ and the $15$ nonzero messages bijectively onto the $15$ vectors of weight $3$ or $4$ in $\F_2^5$. This gives length $5$, against length $6$ for the systematic encoding as shown in Tab \ref{tab:or}. \arxiv{More generally, $n=k+1$ suffices whenever $V(k+1,2t)\le 2^k+1$, which holds for every fixed $t$ and all sufficiently large $k$. The redundancy then drops from $2t$ to $1$, and the difference $2t-1$ grows without bound with $t$.} 

\begin{table}[t]
\centering
\caption{Encodings of the OR function on $\F_2^4$ for $t=1$}
\label{tab:or}
\begin{tabular}{cccc}
\hline
Message $u$ & $f(u)$ & Systematic, $n=6$ & Non-systematic, $n=5$ \\
\hline
$0000$ & $0$ & $000000$ & $00000$ \\
\hline
$1000$ & $1$ & $100011$ & $01111$ \\
$0100$ & $1$ & $010011$ & $10111$ \\
$0010$ & $1$ & $001011$ & $11011$ \\
$0001$ & $1$ & $000111$ & $11101$ \\
$1100$ & $1$ & $110011$ & $11001$ \\
$1010$ & $1$ & $101011$ & $10101$ \\
$1001$ & $1$ & $100111$ & $10011$ \\
$0110$ & $1$ & $011011$ & $01101$ \\
$0101$ & $1$ & $010111$ & $01011$ \\
$0011$ & $1$ & $001111$ & $00111$ \\
$1110$ & $1$ & $111011$ & $11100$ \\
$1101$ & $1$ & $110111$ & $11010$ \\
$1011$ & $1$ & $101111$ & $10110$ \\
$0111$ & $1$ & $011111$ & $01110$ \\
$1111$ & $1$ & $111111$ & $11110$ \\
\hline
\end{tabular}
\end{table}
\end{example}

The lower bound $r\ge 2t$ of~\cite[Cor.~1]{LBWY2023}, valid for every $f$ with $|\mathrm{Im}(f)|\ge 2$, relies on the systematic form. Without the systematic form, this argument gives only $r\ge 1$ for $t\ge 1$. If $n=k$, the encoding is a bijection onto $\F_2^k$, and since $f$ is nonconstant, some two codewords at distance one carry different function values. Example~\ref{ex:or} shows that this weaker bound can be attained, so for the OR function the bound $2t$ exceeds the optimal redundancy by $2t-1$. For a linear function and a linear encoding, the bound $r\ge 2t$ still holds by~\cite[Cor.~4.9]{Ho2026}, so a redundancy below $2t$ requires a nonlinear encoding. A nonlinear encoding can go below it even when $f$ is linear as shown in the following Example \ref{ex:lin2}.

\begin{example}\label{ex:lin2}
Let $f:\F_2^8\to\F_2$, $f(u)=u_1$, and $t=1$. A systematic encoding needs $r\ge 2$. Each of the two classes $f^{-1}(0)$ and $f^{-1}(1)$ has $128$ messages. In $\F_2^9$ there are $130$ vectors of weight at most $3$ and $130$ vectors of weight at least $6$. Map $f^{-1}(0)$ injectively into the first set and $f^{-1}(1)$ injectively into the second. Any two vectors from different sets are at distance at least $3$, so this encoding of length $9$ protects the function value against one error with $r=1$. \arxiv{Section~\ref{sec:twoval} determines, for every nonconstant two-valued $f$ and every $t$, exactly when such an encoding exists with a given length $n$.}
\end{example}

\subsection{Related Work}

Since~\cite{LBWY2023}, work on FCCs has followed two directions. The first retains the Hamming metric and sharpens the redundancy: bounds and constructions for general and for linear functions are given in~\cite{PR2025}, the optimal redundancy is determined for Hamming weight distribution functions in~\cite{GXZZ2025}, the problem is studied over larger fields in~\cite{LS2025}, and functions with locally bounded variation are treated in~\cite{RRHH2025ITW25}. The second changes the channel: symbol-pair and $b$-symbol read channels~\cite{XLC2024,SSY2025,SR2025,VSS2025}, the Lee metric~\cite{VS2025,HUR2025}, and the homogeneous distance~\cite{LL2025}. In~\cite{RRHH2025}, the message itself is protected in addition to the function value, and~\cite{RRHH2026} generalizes the function to an arbitrary partition of the message space. All of these works keep the systematic form of the encoding. The question of which encoding of a given code best protects a part of the message is older than FCCs: it is the unequal error protection problem of Masnick and Wolf~\cite{MW1967}, for which Dunning and Robbins~\cite{DR1978} determined the optimal encodings of a linear code and van Gils~\cite{vG1983} gave length bounds. \arxiv{ For non-systematic encodings of two-valued functions, the relevant extremal result is the vertex-isoperimetric theorem of Harper~\cite{H1966} and its extension to arbitrary radius by R\"aty~\cite{R2020}.}

The cost of systematicity was first studied by Ho~\cite{Ho2026}, for linear functions and linear encodings. The central parameter there is the function-separation distance $\delta_{\mathrm{Enc}}(f)$, the minimum distance between codewords of messages with different function values; the classical unequal error protection (UEP) parameters~\cite{MW1967,DR1978,vG1983} are function-separation distances of coordinate and projection maps. 

Over all linear encodings, the least redundancy achieving separation distance $\delta$ depends on $f$ only through $\operatorname{rank} f$~\cite[Th.~4.4]{Ho2026}. Requiring systematicity relative to a fixed basis can increase it. For $\delta\le 3$ the systematic optimum is determined exactly~\cite[Prop.~5.1, Th.~5.6]{Ho2026}: the gap is zero for $\delta\le 2$, while at $\delta=3$ it vanishes exactly when a projective condition on $f$ holds and is otherwise unbounded~\cite[Cor.~5.9, Cor.~5.12]{Ho2026}. Example~\ref{ex:or} lies outside this setting, since the OR function is not linear and the optimal encoding there is nonlinear.

\subsection{Contributions and Organization}

For a linear function $f$ and a code $C$ in systematic form, linear FCCs were characterized in~\cite[Sec.~VII]{RRHH2025} through the coset code $C/D_f$, where $D_f=\{(u,p)\in C : u\in\ker f\}$: the function values are separated by a distance $\delta$ if and only if the minimum distance between distinct cosets of $D_f$ in $C$, denoted $d(C/D_f)$, is at least $\delta$~\cite[Def.~18]{RRHH2025}. The systematic form is chosen there since every linear code is equivalent to one in systematic form, but this choice fixes the encoding of each code. Different generator matrices of the same code assign different codewords to the messages. For an arbitrary linear encoding, the subcode $\mathrm{Enc}(\ker f)$ plays the role of $D_f$, and $\delta_{\mathrm{Enc}}(f)=\min\{\mathrm{wt}(c):c\in C\setminus \mathrm{Enc}(\ker f)\}$~\cite[Th.~3.6]{Ho2026}. Ho also identifies this quantity with the first relative generalized Hamming weight of $C$ with respect to $\mathrm{Enc}(\ker f)$. For the encoding $u\mapsto uG$ and a linear function $f$ with $U=\ker f$, we write $d_f(G)$ for the minimum distance between codewords of messages with different function values; thus $d_f(G)$ is the minimum weight of $C\setminus UG$, and the choice of $G$ determines which subcode of $C$ plays the role of $UG$. Example~\ref{ex:nosys} shows that no generator matrix in systematic form need attain the largest value of $d_f(G)$.

\begin{example}\label{ex:nosys}
Let $C=\{000,100,011,111\}$ be the $[3,2]$ binary linear code and let $f:\F_2^2\to\F_2$ be given by $f(u_1,u_2)=u_1+u_2$, so that $\ker f=\{00,11\}$. Up to a permutation of the coordinates, the generator matrices of $C$ in systematic form are
\[
G_1=\begin{bmatrix}1&0&0\\0&1&1\end{bmatrix}
\qquad\text{and}\qquad
G_2=\begin{bmatrix}1&0&1\\0&1&0\end{bmatrix}.
\]
In both cases, the message $11$ is encoded to $111$. The messages $00$ and $11$, which have the same function value, are therefore encoded to $000$ and $111$, while each of $10$ and $01$ is encoded to a codeword at distance $1$ from $000$ or $111$. Hence $d_f(G_1)=d_f(G_2)=1$.
 
On the other hand, for
\[
G=\begin{bmatrix}1&1&1\\0&1&1\end{bmatrix},
\]
the messages $00,11$ are encoded to $000,100$, and the messages $10,01$ are encoded to $111,011$. Every codeword in $\{000,100\}$ is at distance at least $2$ from every codeword in $\{111,011\}$, so $d_f(G)=2$. This is the largest possible value: the messages $00$ and $11$ are encoded to $000$ and a nonzero codeword $x$, and the other two messages are encoded to the remaining two nonzero codewords, one of which has weight at most $2$ and is therefore at distance at most $2$ from $000$. Thus, no generator matrix in systematic form achieves the optimal $d_f$ for this code and function.
\end{example}

In~\cite{Ho2026}, the code is free: the optimization ranges over all codes of a given length. In this paper, we take a complementary view, in which the linear code $C$ is fixed, for instance because of its parameters or its decoding algorithm, and the encoding of $C$ is to be chosen. All generator matrices of $C$ give codes with the same $n$, $k$, and $d$, but they can differ as FCCs. For the separation vector, the vector whose $i$th entry is the separation distance of the $i$th message coordinate, the analogous question was considered by Dunning and Robbins~\cite{DR1978}. They showed that every linear code has a generator matrix that is optimal for all message positions simultaneously, and that one is obtained by repeatedly choosing a minimum-weight codeword outside the span of those already chosen. \arxiv{Examples~\ref{ex:or} and~\ref{ex:lin2} lead to a second question, which lies outside the linear setting of~\cite{Ho2026}: given a two-valued function $f$ and an arbitrary one-to-one encoding, what is the least length $n$ for which an $(f,t)$-FCC exists? In this setting neither the code nor the encoding is linear, and the answer is determined by the minimum size of the radius-$2t$ neighbourhood of a set of $\min{(|f^{-1}(0)|,|f^{-1}(1)|)}$ vectors in $\F_2^n$. This minimum is given by the vertex-isoperimetric theorem of Harper~\cite{H1966} and its extension to arbitrary radius by R\"aty~\cite{R2020}.}

Our contributions are as follows.
\begin{itemize}
\item For a given code $C$ and linear function $f$, we give an algorithm that constructs a generator matrix $\Gopt$ with $d_f(\Gopt)=d_f^{*}=\max_G d_f(G)$, and show that $d_f^{*}=\mathrm{wt}(b_{r+1})$, where $b_1,\dots,b_k$ is the greedy basis of~\cite{DR1978} and $r=\dim\ker f$. 
\item We characterize when a generator matrix in systematic form attains $d_f^{*}$, in terms of ordered information sets of $C$. The condition needs to be checked only on the codewords of weight less than $d_f^{*}$, and it gives a simple necessary condition.
\arxiv{\item For a nonconstant $f:\F_2^k\to\F_2$, let $M=\min\{|f^{-1}(0)|,|f^{-1}(1)|\}$ be the size of the smaller preimage. We show that a non-systematic $(f,t)$-FCC of length $n$ exists if and only if the set of vectors within distance $2t$ of the first $M$ vectors of $\F_2^n$ in the simplicial order has at most $2^n-2^k+M$ elements (Theorem~\ref{thm:twoval}). 
\item For $t=1$, we list the class sizes $M$ for which redundancy $1$ suffices when $k\le 9$ (Table~\ref{tab:r1}), and show that for $k\ge 10$ every nonconstant two-valued function admits a non-systematic $(f,1)$-FCC with redundancy $1$, which is optimal (Proposition~\ref{prop:k10}). For a nonzero linear $f$, redundancy $1$ is possible if and only if $k\ge 8$.
\item For general $t$, redundancy $1$ is achievable and optimal for every nonconstant two-valued $f$ once $k\ge\lceil\frac83(2t+1)^2\rceil-1$ (Theorem~\ref{thm:larget}).
\item We define the redundancy-$s$ threshold $\kappa_s(t)$ as the least $\kappa$ such that, for every $k\ge\kappa$, every nonconstant two-valued function on $\F_2^k$ admits a non-systematic $(f,t)$-FCC of length $k+s$. Then we provide an upper bound on this threshold (Theorem~\ref{thm:kappa}).}
\end{itemize}

The rest of the paper is organized as follows. Section~\ref{sec:main} considers linear encodings of a fixed linear code: Section~\ref{sec:opt} introduces the kernel subcode of a generator matrix, Section~\ref{subsec:opt} presents the algorithm and establishes its optimality, and Section~\ref{subsec:sys} characterizes when the optimum is attained by a generator matrix in systematic form. \arxiv{Section~\ref{sec:twoval} considers non-systematic encodings of two-valued functions, first for $t=1$ and then for general $t$. } Section~\ref{conclusion} concludes the paper.

\emph{Notation.} For a prime power $q$, $\F_q$ denotes the finite field with $q$ elements, and an $[n,k]$ code over $\F_q$ is a $k$-dimensional subspace of $\F_q^n$. For $x,y\in\F_q^n$, $d(x,y)$ is the Hamming distance and $\wt(x)=d(x,0)$ the Hamming weight\arxiv{, and $\mathrm{supp}(x)$ is the set of nonzero coordinates of $x$; for $S,T\subseteq\F_q^n$, $d(S,T)=\min\{d(x,y):x\in S,\ y\in T\}$}. The minimum distance of a code $C$ is denoted by $d$, and $\langle S\rangle$ is the subspace spanned by a set $S$ of vectors. The vector $e_i$ is the $i$th standard basis vector, and $V(n,\rho)=\sum_{i=0}^{\rho}\binom{n}{i}$ is the number of vectors of weight at most $\rho$ in $\F_2^n$.

\section{Main results}
\label{sec:main}

For a given code $C$, there are many generator matrices $G' = AGP$, where $A$ is an invertible $k\times k$ matrix and $P$ is a permutation matrix, and they generate $C$ or codes equivalent to it. All of these codes have the same parameters $n$, $k$ and $d$, and they perform identically as error-correcting codes. As function-correcting codes, however, they can differ. The matrix $A$ changes which message is mapped to which codeword, and hence which codewords lie in the same block of the partition on the message space $\mathbb{F}_q^k$ induced by $f$. As a result, for a given function $f$, some of these generator matrices achieve a larger $d_f$ than others. It is natural to look for a generator matrix in systematic form $[I_k \mid P]$, since the message then appears directly in the codeword. However, a generator matrix in systematic form that achieves the largest possible $d_f$ need not exist, as demonstrated in Example~\ref{ex:nosys}.

Motivated by Example~\ref{ex:nosys}, in the rest of this section we do not restrict to generator matrices in systematic form. Instead, for a given code and a given linear function, we find a generator matrix of $C$ that achieves the optimal $d_f$. We then identify conditions under which a generator matrix in systematic form with optimal $d_f$ exists.

\subsection{Kernel Subcode of a Generator Matrix}\label{sec:opt}
 
Let $C$ be an $[n,k]$ linear code over $\F_q$ and let $f:\F_q^k\to\F_q^l$ be a linear function. We have a subspace
\[
U=\ker f,\qquad \dim U = r = k-\operatorname{rank} f,
\]
so that $r=k-l$ when $f$ is surjective. The messages are partitioned into the cosets of $U$, and two messages $u,u'$ satisfy $f(u)\neq f(u')$ if and only if $u-u'\notin U$. For a generator matrix $G$ of $C$ with encoding $u\mapsto uG$, we call
\[
D_G = UG = \{uG : u\in U\}
\]
the \textit{kernel subcode} of $G$. It is an $r$-dimensional subcode of $C$, and the cosets of $U$ are mapped to the cosets of $D_G$ in $C$. We denote by $d_f(G)$ the minimum distance between codewords $uG$ and $u'G$ with $f(u)\neq f(u')$, and by $d_f^{*}=\max_G d_f(G)$ its largest value over all generator matrices $G$ of $C$.
 
In \cite[Sec.~VII]{RRHH2025}, for a linear function $f$ and a linear code in systematic form, the subcode $D_f=\{c=(u,p)\in C \mid u\in \ker f\}$ is considered, and the function-correcting condition is expressed through the distance between the cosets of $D_f$ in $C$. Here $G$ is an arbitrary generator matrix of $C$, not necessarily in systematic form, and the kernel subcode $D_G$ plays the same role: under the encoding $u\mapsto uG$, messages with different function values are mapped to codewords in different cosets of $D_G$, and conversely. Hence $d_f(G)$ is the minimum distance between codewords lying in different cosets of $D_G$. The following identity also appears in~\cite[Th.~3.6]{Ho2026}. We give a short proof in our notation.

\begin{lemma}\label{lem:df}
$d_f(G)=\min\{\wt(x) : x\in C\setminus D_G\}$.
\end{lemma}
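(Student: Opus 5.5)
The plan is to prove the two inequalities separately, using only linearity of $f$ and of the encoding $u\mapsto uG$, together with the fact that $G$ has full rank $k$, so that $u\mapsto uG$ is a bijection from $\F_q^k$ onto $C$.

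\emph{Step 1: a difference of codewords.} Take two messages $u,u'$ with $f(u)\neq f(u')$. By linearity, $f(u-u')\neq 0$, so $w=u-u'\notin U$. Then
\[
d(uG,u'G)=\wt(uG-u'G)=\wt(wG).
\]

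\emph{Step 2: $d_f(G)\ge\min\{\wt(x):x\in C\setminus D_G\}$.} Since $G$ is injective and $D_G=UG$, we have $wG\in D_G$ exactly when $w\in U$. As $w\notin U$, the codeword $wG$ lies in $C\setminus D_G$. Hence every distance counted in $d_f(G)$ is the weight of some element of $C\setminus D_G$.

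\emph{Step 3: $d_f(G)\le\min\{\wt(x):x\in C\setminus D_G\}$.} Let $x\in C\setminus D_G$. Since $G$ generates $C$, we can write $x=wG$ for a unique $w$. This $w$ is not in $U$, because $x\notin UG$, so $f(w)\neq 0=f(0)$. The pair of messages $(w,0)$ then has different function values and gives $d(wG,0G)=\wt(x)$. Therefore $d_f(G)\le\wt(x)$, and taking the minimum over $x$ gives the bound.

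The two inequalities together give the equality. The only point needing care is the equivalence $wG\in D_G\iff w\in U$. The direction $\Leftarrow$ is immediate, and $\Rightarrow$ uses injectivity of $G$, i.e. that its rows are linearly independent. The set $C\setminus D_G$ is nonempty whenever $f\neq0$, since then $r<k$. If $f=0$, both sides are minima over empty sets and can be interpreted as $\infty$. No other obstacle is expected.
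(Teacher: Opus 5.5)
Your proposal is correct and takes essentially the same route as the paper. Both arguments use injectivity of $u\mapsto uG$ to get $u-u'\notin U \iff uG-u'G\notin D_G$, note that every relevant difference is a codeword of $C\setminus D_G$, and realize each $x\in C\setminus D_G$ as the distance between $x$ and $0G$. The only difference is presentation: the paper combines your two inequalities into a single set equality, and your remark on the degenerate case $f=0$ is a harmless addition.
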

\begin{proof}
Since $G$ has full rank, $f(u)\neq f(u')$, i.e., $u-u'\notin U$, if and only if $uG-u'G\notin D_G$. Hence
\[
d_f(G)=\min\{\wt(uG-u'G) : u,u'\in\F_q^k,\ uG-u'G\notin D_G\}.
\]
Every difference $uG-u'G$ in this set is a codeword in $C\setminus D_G$. Conversely, every $x\in C\setminus D_G$ is of the form $x=uG-0G$ with $u'=0$. Therefore $d_f(G)=\min\{\wt(x) : x\in C\setminus D_G\}$.
\end{proof}
 
 \subsection{Generator Matrix with Optimal $d_f$ for a Given Code and Function}\label{subsec:opt}

Finding the minimum distance, and hence the low-weight codewords, of a linear code is NP-hard in general. Assuming that the low-weight codewords of $C$ are available, we construct a basis of $C$ with the following greedy algorithm, as in~\cite{DR1978}.
\begin{itemize}
\item Start by selecting $b_1$ as a codeword of weight $d$, the minimum distance of $C$.
\item Take the next low-weight codeword which is independent of $b_1$, i.e.,
\[
b_2 = \text{a minimum-weight codeword in } C\setminus\langle b_1\rangle .
\]
\item In general, $b_{i+1}$ is a minimum-weight codeword in $C\setminus\langle b_1,\dots,b_i\rangle$. Do this $k$ times to find a basis $\{b_1,b_2,\dots,b_k\}$ of $C$.
\end{itemize}
Let $B$ be the $k\times n$ matrix
\[
B=\begin{bmatrix} b_1\\ b_2\\ \vdots\\ b_k\end{bmatrix}.
\]
Since $b_{i+1}\notin\langle b_1,\dots,b_{i-1}\rangle$, it was also a candidate at step $i$, and therefore
\begin{equation}\label{eq:mono}
\wt(b_1)\le \wt(b_2)\le\cdots\le\wt(b_k).
\end{equation}
 
Now we find a generator matrix $\Gopt$ which maps the elements of a basis $\{u_1,u_2,\dots,u_r\}$ of $U$ to $\{b_1,b_2,\dots,b_r\}$, so that the $r$ basis codewords of smallest weight are images of $\ker f$. To do that, extend $\{u_1,\dots,u_r\}$ to a basis $\{u_1,\dots,u_r,v_{r+1},\dots,v_k\}$ of $\F_q^k$ and form the $k\times k$ invertible matrix
\[
B_U=\begin{bmatrix} u_1\\ \vdots\\ u_r\\ v_{r+1}\\ \vdots\\ v_k\end{bmatrix}.
\]
Then $u_i=e_iB_U$ and $b_i=e_iB$, where $e_i$ is the $i$th standard basis vector of $\F_q^k$. If
\[
\Gopt = B_U^{-1}B,
\]
then for any $1\le i\le r$,
\[
u_i\Gopt = u_iB_U^{-1}B = e_iB_UB_U^{-1}B = e_iB = b_i .
\]
Since $B_U$ is invertible and the rows of $B$ form a basis of $C$, $\Gopt$ is a generator matrix of $C$, and its kernel subcode is $D_{\Gopt}=\langle b_1,\dots,b_r\rangle$. The complete procedure is given in Algorithm~\ref{alg:opt}.

\begin{algorithm}[t]
\caption{Generator matrix with optimal $d_f$ for a given code $C$ and function $f$}\label{alg:opt}
\begin{algorithmic}[1]
\Require An $[n,k]$ linear code $C$ over $\F_q$; a basis $\{u_1,\dots,u_r\}$ of $U=\ker f$
\Ensure A generator matrix $\Gopt$ of $C$ with $d_f(\Gopt)=\wt(b_{r+1})$
\For{$i=1$ to $k$}
    \State $b_i \gets$ a minimum-weight codeword in $C\setminus\langle b_1,\dots,b_{i-1}\rangle$
\EndFor
\State $B \gets$ the $k\times n$ matrix with rows $b_1,\dots,b_k$
\State Extend $\{u_1,\dots,u_r\}$ to a basis $\{u_1,\dots,u_r,v_{r+1},\dots,v_k\}$ of $\F_q^k$
\State $B_U \gets$ the $k\times k$ matrix with rows $u_1,\dots,u_r,v_{r+1},\dots,v_k$
\State $\Gopt \gets B_U^{-1}B$
\State \Return $\Gopt$ and $d_f(\Gopt)=\wt(b_{r+1})$
\end{algorithmic}
\end{algorithm}
 
\begin{theorem}\label{thm:opt}
For the generator matrix $\Gopt$ obtained from Algorithm~\ref{alg:opt},
\[
d_f(\Gopt)=\wt(b_{r+1}),
\]
and $d_f(G)\le \wt(b_{r+1})$ for every generator matrix $G$ of $C$. Hence $d_f^{*}=\wt(b_{r+1})$, and $\Gopt$ gives an optimal encoding.
\end{theorem}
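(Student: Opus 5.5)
The proof splits into an achievability part for $\Gopt$ and an upper bound valid for every generator matrix. Both parts rest on Lemma~\ref{lem:df}, which reduces $d_f(G)$ to the minimum weight of $C\setminus D_G$.

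\emph{Achievability.} By construction, $D_{\Gopt}=\langle b_1,\dots,b_r\rangle$. The greedy step defines $b_{r+1}$ as a minimum-weight codeword in $C\setminus\langle b_1,\dots,b_r\rangle = C\setminus D_{\Gopt}$. Lemma~\ref{lem:df} then gives $d_f(\Gopt)=\wt(b_{r+1})$ directly.

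\emph{Upper bound.} Let $G$ be any generator matrix of $C$. Its kernel subcode $D_G=UG$ has dimension $r$. The subspace $\langle b_1,\dots,b_{r+1}\rangle$ has dimension $r+1>r$, so it is not contained in $D_G$. Hence some $b_j$ with $j\le r+1$ lies outside $D_G$; otherwise the whole span would lie in $D_G$. By Lemma~\ref{lem:df} and the monotonicity \eqref{eq:mono},
\[
d_f(G)\le \wt(b_j)\le \wt(b_{r+1}).
\]
Combining the two parts gives $d_f^{*}=\wt(b_{r+1})$, attained by $\Gopt$.

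\emph{Edge case and main step.} If $r=k$, then $f$ is constant and $b_{r+1}$ is undefined; I would note that the theorem presupposes $r<k$, so $f$ is nonconstant. The only real content is the dimension-counting step in the upper bound. It works because the greedy basis has nondecreasing weights, which is exactly \eqref{eq:mono}. I therefore expect no significant obstacle.
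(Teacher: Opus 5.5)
Your proof is correct and follows the paper's argument essentially step for step. Achievability comes from $D_{\Gopt}=\langle b_1,\dots,b_r\rangle$ together with Lemma~\ref{lem:df}, and the upper bound comes from the fact that the $r+1$ independent codewords $b_1,\dots,b_{r+1}$ cannot all lie in the $r$-dimensional $D_G$, combined with \eqref{eq:mono}; your remark that the statement tacitly requires $r<k$ (nonconstant $f$) is a sensible addition the paper leaves implicit.
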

\begin{proof}
We have $D_{\Gopt}=\langle b_1,\dots,b_r\rangle$. By the greedy choice, $b_{r+1}$ is a minimum-weight codeword in $C\setminus\langle b_1,\dots,b_r\rangle$, so Lemma~\ref{lem:df} gives $d_f(\Gopt)=\wt(b_{r+1})$.
 
Let $G$ be any generator matrix of $C$, with kernel subcode $D_G$ of dimension $r$. The $r+1$ independent codewords $b_1,\dots,b_{r+1}$ cannot all lie in $D_G$, so some $b_j\notin D_G$ with $j\le r+1$. By Lemma~\ref{lem:df} and \eqref{eq:mono}, $d_f(G)\le\wt(b_j)\le\wt(b_{r+1})$.
\end{proof}

\begin{remark}
The choice of the extension vectors $v_{r+1},\dots,v_k$ does not affect the messages in $U$, since $u_i\Gopt=b_i$ for $1\le i\le r$. It only changes the codewords assigned to the messages outside $U$. 
\end{remark}
 
\begin{example} \label{ex:73}
Consider a linear function $f:\F_2^3\to\F_2^2$ with $\ker f=\{000,110\}$. Let $U=\langle 110\rangle=\ker f$, so $\dim U = r = 1$. Let $C$ be the $[7,3]$ binary linear code with generator matrix
\[
G=\begin{bmatrix}
0&0&1&1&1&0&1\\
0&1&0&0&0&0&1\\
1&1&1&0&0&1&0
\end{bmatrix}.
\]
The codewords of this code have weights $0,2,4,4,4,4,4,6$, so its minimum distance is $d=2$.
 
\emph{Greedy basis.} We have $b_1=0100001$ with $\wt(b_1)=2$, and $b_2=0011101$ with $\wt(b_2)=4$. Then
\[
\langle b_1,b_2\rangle=\{0000000,\,0100001,\,0011101,\,0111100\},
\]
and $b_3$ is a codeword of lowest weight in $C\setminus\langle b_1,b_2\rangle$, namely $b_3=1110010$. So
\[
B=\begin{bmatrix}
0&1&0&0&0&0&1\\
0&0&1&1&1&0&1\\
1&1&1&0&0&1&0
\end{bmatrix}.
\]
Since $r=1$, Theorem~\ref{thm:opt} gives $d_f(\Gopt)=\wt(b_{r+1})=\wt(b_2)=4$.
 
\emph{Construction of $\Gopt$.} The basis of $U$ is $\{110\}$; we extend it to the basis $\{110,100,001\}$ of $\F_2^3$. So
\[
B_U=\begin{bmatrix}1&1&0\\1&0&0\\0&0&1\end{bmatrix},\qquad
B_U^{-1}=\begin{bmatrix}0&1&0\\1&1&0\\0&0&1\end{bmatrix},
\]
and the required generator matrix is
\[
\Gopt=B_U^{-1}B=
\begin{bmatrix}
0&0&1&1&1&0&1\\
0&1&1&1&1&0&0\\
1&1&1&0&0&1&0
\end{bmatrix}.
\]
Table~\ref{tab:ex} lists the encodings of all messages under $\Gopt$ and $G$, grouped by the cosets of $U$. Under $G$, the codeword $0100001$ of weight $2$ is the image of $010\notin U$, so $d_f(G)=2$, whereas under $\Gopt$ it is the image of $110\in U$ and $d_f(\Gopt)=4$. Both generator matrices generate the same code, with minimum distance $d=2$.
\end{example}
 
\begin{table}[t]
\centering
\caption{Encodings of the messages under $\Gopt$ and $G$, grouped by the cosets of $U=\{000,110\}$.}
\label{tab:ex}
\begin{tabular}{c|c|c}
\hline
Message $u$ & $u\Gopt$ & $uG$\\
\hline
$000$ & $0000000$ & $0000000$\\
$110$ & $0100001$ & $0111100$\\
\hline
$100$ & $0011101$ & $0011101$\\
$010$ & $0111100$ & $0100001$\\
\hline
$001$ & $1110010$ & $1110010$\\
$111$ & $1010011$ & $1001110$\\
\hline
$101$ & $1101111$ & $1101111$\\
$011$ & $1001110$ & $1010011$\\
\hline
$d$   & $2$ & $2$\\
$d_f$ & $4$ & $2$\\
\hline
\end{tabular}
\end{table}

\subsection{Generator Matrix in Systematic Form with Optimal $d_f$}\label{subsec:sys}

By Theorem~\ref{thm:opt}, the optimal value $d_f^{*}$ equals $\wt(b_{r+1})$ and is achieved by $\Gopt$. As shown in Example~\ref{ex:nosys}, $d_f^{*}$ need not be achieved by a generator matrix in systematic form. In this subsection, we identify when such a generator matrix exists.

We call
\[
W=\langle\, c\in C : \wt(c)<d_f^{*} \,\rangle
\]
the \textit{low-weight subcode} of $C$. It depends only on $C$ and $f$, and not on the generator matrix. The following lemma shows that a generator matrix is optimal exactly when its kernel subcode contains $W$.

\begin{lemma}\label{lem:W}
A generator matrix $G$ of $C$ achieves $d_f^{*}$ if and only if $W\subseteq D_G$.
\end{lemma}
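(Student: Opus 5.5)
Both directions rest on Lemma~\ref{lem:df}, which expresses $d_f(G)$ as the minimum weight of $C\setminus D_G$, together with the value $d_f^{*}=\wt(b_{r+1})$ from Theorem~\ref{thm:opt}. The statement says that $d_f(G)\ge d_f^{*}$ holds exactly when every codeword of weight below $d_f^{*}$ lies in $D_G$. Because $D_G$ is a subspace, that in turn holds exactly when $W\subseteq D_G$.

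For the forward direction, I will assume $d_f(G)=d_f^{*}$. Take any $c\in C$ with $\wt(c)<d_f^{*}$. If $c\notin D_G$, then Lemma~\ref{lem:df} gives $d_f(G)\le \wt(c)<d_f^{*}$, which is a contradiction. So every such generator of $W$ lies in $D_G$, and since $D_G$ is a linear subspace, $W\subseteq D_G$.

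For the converse, I will assume $W\subseteq D_G$. Then every codeword $x\in C\setminus D_G$ lies outside $W$. In particular $x$ is not one of the spanning vectors of $W$, so $\wt(x)\ge d_f^{*}$. Lemma~\ref{lem:df} then gives $d_f(G)\ge d_f^{*}$. The reverse inequality $d_f(G)\le d_f^{*}$ holds for every generator matrix by Theorem~\ref{thm:opt}, so $d_f(G)=d_f^{*}$.

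There is no real obstacle here. The only point needing care is the converse: we must use that codewords of weight below $d_f^{*}$ are themselves elements of $W$, and not merely that $W$ is spanned by them. With that observation the containment is immediate, and linearity of $D_G$ is what lets us pass from the generators of $W$ to all of $W$ in the forward direction. It may also be worth noting that $W\subseteq\langle b_1,\dots,b_r\rangle$, by the greedy construction, so that $\dim W\le r$. This is consistent with the statement but is not needed for the proof.
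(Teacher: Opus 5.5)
Your proof is correct and is essentially the paper's argument. One direction uses Lemma~\ref{lem:df} together with the maximality of $d_f^{*}$. Your other direction is the contrapositive of the paper's, using that $D_G$ is a subspace to pass from the low-weight spanning codewords to all of $W$. The only cosmetic difference is that you cite Theorem~\ref{thm:opt} for $d_f(G)\le d_f^{*}$, where the definition of $d_f^{*}$ as a maximum already suffices.
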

\begin{proof}
If $W\subseteq D_G$, then every codeword of weight less than $d_f^{*}$ lies in $D_G$, so every $x\in C\setminus D_G$ has $\wt(x)\ge d_f^{*}$, and Lemma~\ref{lem:df} gives $d_f(G)\ge d_f^{*}$. Since $d_f(G)\le d_f^{*}$ by the definition of $d_f^{*}$, equality holds. Conversely, if $W\not\subseteq D_G$, then, since $D_G$ is a subspace and $W$ is spanned by codewords of weight less than $d_f^{*}$, one of these codewords $c$ is not in $D_G$, and Lemma~\ref{lem:df} gives $d_f(G)\le\wt(c)<d_f^{*}$.
\end{proof}

Let $G$ be a generator matrix of $C$. A set $\mathcal{I}\subseteq[n]$ of $k$ coordinates is called an information set of $C$ if the $k\times k$ submatrix of $G$ formed by the columns indexed by $\mathcal{I}$ is invertible; equivalently, the projection $c\mapsto (c_i)_{i\in\mathcal{I}}$ is a bijection from $C$ onto $\mathbb{F}_q^k$. Because systematic encodings assign message symbols to coordinates in a fixed order, we work with ordered information sets. An ordered information set of $C$ is a tuple $\mathcal{I}=(i_1,\dots,i_k)$ of distinct coordinates whose entries form an information set. We write $G_{\mathcal{I}}$ for the $k\times k$ matrix formed by the columns $i_1,\dots,i_k$ of $G$ in this order, and for $c\in C$ we write $c|_{\mathcal{I}}=(c_{i_1},\dots,c_{i_k})$. The matrix $G_{\mathcal{I}}^{-1}G$ is a generator matrix of $C$ whose columns $i_1,\dots,i_k$ form the identity matrix $I_k$.
Permuting these columns to the first $k$ positions gives a generator matrix in systematic form $[I_k\mid P]$ of a code equivalent to $C$. Conversely, every generator matrix in systematic form of a code obtained from $C$ by a permutation of the coordinates arises in this way. Since a permutation of the coordinates preserves the Hamming distance, it does not change $d_f$, and it is enough to consider the generator matrices $G_{\mathcal{I}}^{-1}G$. Under the encoding $u\mapsto uG_{\mathcal{I}}^{-1}G$, the message $u$ is mapped to the unique codeword $c\in C$ with $c|_{\mathcal{I}}=u$. For an ordered information set $\mathcal{I}$, we write $W|_{\mathcal{I}}=\{c|_{\mathcal{I}} : c\in W\}$ and denote by $D_{\mathcal{I}}$ the kernel subcode of $G_{\mathcal{I}}^{-1}G$.

\begin{theorem}\label{thm:sys}
There exists a generator matrix of $C$ in systematic form, up to a permutation of the coordinates, that achieves $d_f^{*}$ if and only if $W|_{\mathcal{I}}\subseteq U$ for some ordered information set $\mathcal{I}$ of $C$.
\end{theorem}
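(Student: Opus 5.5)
The plan is to reduce the statement to Lemma~\ref{lem:W} by identifying the kernel subcode $D_{\mathcal{I}}$ explicitly. As noted just before the theorem, every generator matrix in systematic form, up to a coordinate permutation, is of the form $G_{\mathcal{I}}^{-1}G$ for an ordered information set $\mathcal{I}$. Coordinate permutations preserve $d_f$. So the existence of an optimal systematic generator matrix is equivalent to the existence of an ordered information set $\mathcal{I}$ with $d_f(G_{\mathcal{I}}^{-1}G)=d_f^{*}$. By Lemma~\ref{lem:W}, this holds if and only if $W\subseteq D_{\mathcal{I}}$.

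The key step is the description
\[
D_{\mathcal{I}}=\{c\in C : c|_{\mathcal{I}}\in U\}.
\]
Under the encoding $u\mapsto uG_{\mathcal{I}}^{-1}G$, the message $u$ is sent to the unique codeword $c$ with $c|_{\mathcal{I}}=u$. This holds because the columns $i_1,\dots,i_k$ of $G_{\mathcal{I}}^{-1}G$ form $I_k$ in order, so $(uG_{\mathcal{I}}^{-1}G)|_{\mathcal{I}}=u$. Hence $D_{\mathcal{I}}=UG_{\mathcal{I}}^{-1}G$ consists exactly of the codewords whose restriction to $\mathcal{I}$ lies in $U$. Conversely, any $c\in C$ with $c|_{\mathcal{I}}\in U$ is the image of the message $c|_{\mathcal{I}}\in U$, by bijectivity of the projection onto $\mathcal{I}$.

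With this description, $W\subseteq D_{\mathcal{I}}$ says exactly that $c|_{\mathcal{I}}\in U$ for every $c\in W$, that is, $W|_{\mathcal{I}}\subseteq U$. Chaining the equivalences gives both directions of the theorem.

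I expect no real obstacle. The only point needing care is the bookkeeping in the reduction to the matrices $G_{\mathcal{I}}^{-1}G$. A systematic generator matrix $[I_k\mid P]$ of a permuted code $C\pi$ has to be matched with the ordered information set $\mathcal{I}=(\pi^{-1}(1),\dots,\pi^{-1}(k))$ of $C$. Its $d_f$ must equal that of the corresponding generator matrix of $C$; this follows since $\pi$ is an isometry and the same messages map to corresponding codewords. This step was already asserted in the text preceding the theorem, so I would simply invoke it.

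I would also remark that it suffices to check $W|_{\mathcal{I}}\subseteq U$ on a spanning set of $W$, namely the codewords of weight less than $d_f^{*}$. This works because restriction to $\mathcal{I}$ is linear and $U$ is a subspace.
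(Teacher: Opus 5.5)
Your proposal is correct and follows essentially the same route as the paper: reduce to the matrices $G_{\mathcal{I}}^{-1}G$, show $D_{\mathcal{I}}=\{c\in C : c|_{\mathcal{I}}\in U\}$ using injectivity of restriction to $\mathcal{I}$ on $C$, and conclude with Lemma~\ref{lem:W}. Your explicit matching of a systematic matrix of the permuted code $C\pi$ with an ordered information set of $C$ covers the reduction that the paper states in the paragraph just before the theorem.
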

\begin{proof}
As shown above, it is enough to consider the generator matrices $G_{\mathcal{I}}^{-1}G$. We first show that
\[
D_{\mathcal{I}}=\{c\in C : c|_{\mathcal{I}}\in U\}.
\]
If $c=uG_{\mathcal{I}}^{-1}G$ for some $u\in U$, then $c|_{\mathcal{I}}=uG_{\mathcal{I}}^{-1}G_{\mathcal{I}}=u\in U$. Conversely, if $c\in C$ and $c|_{\mathcal{I}}\in U$, then $c$ and $(c|_{\mathcal{I}})G_{\mathcal{I}}^{-1}G$ are codewords with the same restriction to $\mathcal{I}$. Since $G_{\mathcal{I}}$ is invertible, the map $c\mapsto c|_{\mathcal{I}}$ is injective on $C$, so the two codewords are equal and $c\in D_{\mathcal{I}}$.

By Lemma~\ref{lem:W}, $G_{\mathcal{I}}^{-1}G$ achieves $d_f^{*}$ if and only if $W\subseteq D_{\mathcal{I}}$, which by the above holds if and only if $W|_{\mathcal{I}}\subseteq U$. In that case, $G_{\mathcal{I}}^{-1}G$, with the columns $i_1,\dots,i_k$ moved to the first $k$ positions, is a generator matrix in systematic form that achieves $d_f^{*}$.
\end{proof}

Since the map $c\mapsto c|_{\mathcal{I}}$ is linear and $U$ is a subspace, it is enough to check the condition of Theorem~\ref{thm:sys} on a basis of $W$. Such a basis is given by the greedy basis of Algorithm~\ref{alg:opt}, as shown next.

\begin{corollary}\label{cor:Wbasis}
Let $b_1,\dots,b_k$ be a basis of $C$ obtained by Algorithm~\ref{alg:opt}, and let $m$ be the number of indices $i$ with $\wt(b_i)<d_f^{*}$. Then $m\le r$ and $W=\langle b_1,\dots,b_m\rangle$. Hence, a generator matrix in systematic form achieving $d_f^{*}$ exists if and only if there is an ordered information set $\mathcal{I}$ of $C$ such that
\[
b_j|_{\mathcal{I}}\in U \quad\text{for all } 1\le j\le m .
\]
\end{corollary}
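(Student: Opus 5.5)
By the monotonicity \eqref{eq:mono} and Theorem~\ref{thm:opt}, which gives $d_f^{*}=\wt(b_{r+1})$, the indices with $\wt(b_i)<d_f^{*}$ form an initial segment $\{1,\dots,m\}$. Since $\wt(b_{r+1})=d_f^{*}$ is not less than $d_f^{*}$, we get $m\le r$. The inclusion $\langle b_1,\dots,b_m\rangle\subseteq W$ is then immediate from the definition of $W$, because each $b_j$ with $j\le m$ has weight below $d_f^{*}$.

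For the reverse inclusion, I would show that every codeword $c$ with $\wt(c)<d_f^{*}$ lies in $\langle b_1,\dots,b_m\rangle$. Suppose not. If $m=k$ there is nothing to prove, and $m\le r<k$ whenever $f$ is nonzero. So $c\in C\setminus\langle b_1,\dots,b_m\rangle$, and $c$ was a candidate at step $m+1$ of the greedy algorithm. Hence $\wt(b_{m+1})\le\wt(c)<d_f^{*}$, which contradicts the maximality of $m$. Since $W$ is spanned by these codewords, $W\subseteq\langle b_1,\dots,b_m\rangle$.

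For the final equivalence, I would apply Theorem~\ref{thm:sys}. Because $c\mapsto c|_{\mathcal{I}}$ is linear and $U$ is a subspace, $W|_{\mathcal{I}}\subseteq U$ holds if and only if $b_j|_{\mathcal{I}}\in U$ for the spanning vectors $b_1,\dots,b_m$ of $W$. The only delicate point is the edge case: if $f$ is the zero map, then $r=k$ and $d_f^{*}$ is a minimum over the empty set, so it should be taken as $+\infty$ by convention. In that case $W=C=\langle b_1,\dots,b_k\rangle$ and the statement still holds. Otherwise the argument is routine.
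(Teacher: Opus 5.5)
Your proposal is correct and follows the paper's proof step for step: monotonicity \eqref{eq:mono} with $d_f^{*}=\wt(b_{r+1})$ gives the initial segment and $m\le r$, and the greedy choice at step $m+1$ gives the reverse inclusion. The final equivalence then follows from Theorem~\ref{thm:sys} together with linearity of $c\mapsto c|_{\mathcal{I}}$. Your treatment of the zero-function case ($r=k$, $d_f^{*}=+\infty$) is a small addition that the paper leaves implicit.
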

\begin{proof}
By \eqref{eq:mono} and $d_f^{*}=\wt(b_{r+1})$, the indices $i$ with $\wt(b_i)<d_f^{*}$ are $1,\dots,m$ with $m\le r$, and $\wt(b_{m+1})=d_f^{*}$. Clearly $\langle b_1,\dots,b_m\rangle\subseteq W$. Conversely, let $c\in C$ with $\wt(c)<d_f^{*}$. If $c\notin\langle b_1,\dots,b_m\rangle$, then at step $m+1$ of Algorithm~\ref{alg:opt} a codeword of weight at most $\wt(c)<\wt(b_{m+1})$ would have been chosen, which is a contradiction. Hence $W\subseteq\langle b_1,\dots,b_m\rangle$. The last statement follows from Theorem~\ref{thm:sys}.
\end{proof}

Note that the condition in Corollary~\ref{cor:Wbasis} involves only $b_1,\dots,b_m$ and not all of $b_1,\dots,b_r$.
When $m<r$, the codewords $b_{m+1},\dots,b_r$ have weight $d_f^{*}$, and they may lie outside $D_{\mathcal{I}}$ without reducing $d_f$, as shown in Example~\ref{ex:sys}.
 When $m=r$, since the map $c\mapsto c|_{\mathcal{I}}$ is injective on $C$ and $\dim U=r$, the condition becomes $U=\langle b_1|_{\mathcal{I}},\dots,b_r|_{\mathcal{I}}\rangle$.

The following necessary condition can be checked without searching over information sets.

\begin{corollary}\label{cor:nec}
If a generator matrix in systematic form achieves $d_f^{*}$, then
\[
\wt(c)\ge\min\{\wt(u) : u\in U\setminus\{0\}\}\quad\text{for all } c\in W\setminus\{0\}.
\]
\end{corollary}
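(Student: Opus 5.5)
By Theorem~\ref{thm:sys}, if a generator matrix in systematic form (up to a permutation of the coordinates) achieves $d_f^{*}$, then there is an ordered information set $\mathcal{I}=(i_1,\dots,i_k)$ of $C$ with $W|_{\mathcal{I}}\subseteq U$. I would fix such an $\mathcal{I}$ and take an arbitrary nonzero $c\in W$. Then I would compare $\wt(c)$ with the weight of its restriction $c|_{\mathcal{I}}$.

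The first step is to show that $c|_{\mathcal{I}}$ is a nonzero element of $U$. It lies in $U$ by the choice of $\mathcal{I}$. It is nonzero because the projection $c\mapsto c|_{\mathcal{I}}$ is injective on $C$ ($G_{\mathcal{I}}$ is invertible, as noted before Theorem~\ref{thm:sys}), and $c\neq 0$. Hence $c|_{\mathcal{I}}\in U\setminus\{0\}$, and so
\[
\wt(c|_{\mathcal{I}})\ge\min\{\wt(u):u\in U\setminus\{0\}\}.
\]

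The second step is the trivial observation that restricting to a subset of coordinates cannot increase the Hamming weight. The nonzero entries of $c|_{\mathcal{I}}$ are the entries $c_{i_j}$ with $c_{i_j}\neq 0$, and these are among the nonzero entries of $c$. Therefore $\wt(c)\ge\wt(c|_{\mathcal{I}})$. Combining this with the first step gives the claimed inequality for every $c\in W\setminus\{0\}$.

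I do not expect a real obstacle here. The only point needing care is nonvanishing of $c|_{\mathcal{I}}$, which relies on $\mathcal{I}$ being an information set (injectivity on $C$). Without it, the minimum over $U\setminus\{0\}$ would not apply.
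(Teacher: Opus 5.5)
Your proof is correct and is essentially the paper's own argument. Like the paper, you take the ordered information set $\mathcal{I}$ from Theorem~\ref{thm:sys}, use injectivity of $c\mapsto c|_{\mathcal{I}}$ on $C$ to get $c|_{\mathcal{I}}\in U\setminus\{0\}$, and conclude with $\wt(c)\ge\wt(c|_{\mathcal{I}})$.
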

\begin{proof}
Let $\mathcal{I}$ be an ordered information set with $W|_{\mathcal{I}}\subseteq U$, as in Theorem~\ref{thm:sys}, and let $c\in W\setminus\{0\}$. Since the map $c\mapsto c|_{\mathcal{I}}$ is injective on $C$, $c|_{\mathcal{I}}$ is a nonzero vector of $U$, and $\wt(c)\ge\wt(c|_{\mathcal{I}})$.
\end{proof}

\begin{example}\label{ex:sys}
Consider the $[7,3]$ code $C$ and the function $f$ with $U=\{000,110\}$ from Example~\ref{ex:73}. Here $d_f^{*}=4$, $m=1$ and $W=\langle b_1\rangle$ with $b_1=0100001$. For the ordered information set $\mathcal{I}=(2,7,1)$, we have $b_1|_{\mathcal{I}}=110\in U$. Therefore, by Corollary~\ref{cor:Wbasis}, the generator matrix
\[
G_{\mathcal{I}}^{-1}G=
\begin{bmatrix}
0&1&1&1&1&0&0\\
0&0&1&1&1&0&1\\
1&0&0&1&1&1&0
\end{bmatrix}
\]
achieves $d_f^{*}=4$. Moving the columns $2,7,1$ to the first three positions gives the generator matrix in systematic form
\[
\begin{bmatrix}
1&0&0&1&1&1&0\\
0&1&0&1&1&1&0\\
0&0&1&0&1&1&1
\end{bmatrix},
\]
for which $D_{\mathcal{I}}=\{0000000,0100001\}$ in the original coordinates. On the other hand, for $\mathcal{I}=(1,2,3)$ we have $b_1|_{\mathcal{I}}=010\notin U$, and the corresponding generator matrix in systematic form gives $d_f=2$.

Next, let $f(u_1,u_2,u_3)=u_1+u_2$, so that $U=\ker f=\{000,110,001,111\}$ and $r=2$. By Theorem~\ref{thm:opt}, $d_f^{*}=\wt(b_3)=4$, and since only $b_1$ has weight less than $4$, we have $m=1<r$ and again $W=\langle b_1\rangle$. For $\mathcal{I}=(2,7,1)$, we have $b_1|_{\mathcal{I}}=110\in U$, so the same generator matrix $G_{\mathcal{I}}^{-1}G$ achieves $d_f^{*}=4$, now with $D_{\mathcal{I}}=\{0000000,0100001,1001110,1101111\}$. 

Now consider instead a linear function with $U=\{000,111\}$. Then $W$ is again $\langle b_1\rangle$, but $\wt(b_1)=2<3=\min\{\wt(u) : u\in U\setminus\{0\}\}$. By Corollary~\ref{cor:nec}, no generator matrix in systematic form achieves $d_f^{*}=4$, and every such generator matrix gives $d_f=2$, whereas Algorithm~\ref{alg:opt} gives $d_f(\Gopt)=4$.

Similarly, in Example~\ref{ex:nosys}, we have $d_f^{*}=2$ and $W=\langle 100\rangle$, while $\min\{\wt(u) : u\in U\setminus\{0\}\}=2>\wt(100)$. Hence, Corollary~\ref{cor:nec} explains why no generator matrix in systematic form achieves $d_f^{*}$ in that example.
\end{example}

\arxiv{
\section{Non-systematic Encodings for Functions with Two Values}\label{sec:twoval}

Section~\ref{sec:main} considered linear encodings of a fixed linear code. In this section, the encoding is not required to be linear or systematic. A one-to-one map $\mathrm{Enc}:\F_2^k\to\F_2^n$ that satisfies~\eqref{eq:fcc} is called a non-systematic $(f,t)$-FCC of length $n$, and $n-k$ is its redundancy. Examples~\ref{ex:or} and~\ref{ex:lin2} are of this kind. We consider a nonconstant $f:\F_2^k\to\F_2$ and write $M=\min\{|f^{-1}(0)|,|f^{-1}(1)|\}$, so that the two classes have sizes $M$ and $2^k-M$.

For $S\subseteq\F_2^n$ and $\rho\ge 0$, let $N_\rho(S)=\{y\in\F_2^n : d(y,S)\le\rho\}$ denote the radius-$\rho$ neighbourhood of $S$,  and let $B_j$ be the Hamming ball of radius $j$ around $0$, so that $|B_j|=V(n,j)$ and $B_j=N_j(\{0\})$. If $\mathrm{Enc}$ maps the two classes onto $S,T\subseteq\F_2^n$, then~\eqref{eq:fcc} holds if and only if $d(S,T)\ge 2t+1$, that is, $T\cap N_{2t}(S)=\emptyset$. Since $|T|=2^k-|S|$, the question is how small $N_{2t}(S)$ can be for a set $S$ of given size. This is the vertex-isoperimetric problem on the hypercube: among all sets $S\subseteq\F_2^n$ of a given size, find one whose neighbourhood $N_\rho(S)$ is as small as possible. Harper~\cite{H1966} showed that Hamming balls are optimal, and, for sizes between two ball sizes, a ball together with a suitably chosen part of the next weight vectors.

Following~\cite{R2020}, we define the simplicial order on $\F_2^n$ by $x\prec y$ if $\wt(x)<\wt(y)$, or $\wt(x)=\wt(y)$ and $\min\big(\mathrm{supp}(x+y)\big)\in\mathrm{supp}(x)$. Thus, vectors are ordered by weight, and vectors of equal weight are compared at the first coordinate where they differ. For $1\le M\le 2^n$, let $I_M\subseteq\F_2^n$ be the set of the first $M$ vectors in this order. We call $I_M$ the initial segment of size $M$ of the simplicial order. For example, in $\F_2^4$,
\[
I_7=\{0000,1000,0100,0010,0001,1100,1010\}.
\]
If $M=V(n,j)$, then $I_M=B_j$.

\begin{lemma}[{\cite{H1966}, \cite[Th.~1, Th.~4]{R2020}}]\label{lem:harper}
For every $S\subseteq\F_2^n$ with $|S|=M$ and every $\rho\ge 1$,
\[
|N_\rho(S)|\ge|N_\rho(I_M)|.
\]
\end{lemma}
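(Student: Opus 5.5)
This lemma is cited from Harper and Räty, so the natural plan is to reduce it to the radius-one case and then iterate. I would proceed in three steps.

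\emph{Reduction to $\rho=1$.} Note that $N_\rho(S)=N_1(N_{\rho-1}(S))$ for $\rho\ge 1$. Suppose we know two facts. The first is the radius-one statement $|N_1(S)|\ge |N_1(I_{|S|})|$ for every $S$. The second is that the neighbourhood of an initial segment is again an initial segment, namely $N_1(I_M)=I_{|N_1(I_M)|}$. Then induction on $\rho$ finishes the argument. Write $S'=N_{\rho-1}(S)$. By induction, $|S'|\ge |N_{\rho-1}(I_M)|=:M'$, and $N_{\rho-1}(I_M)=I_{M'}$. Monotonicity of $M\mapsto |N_1(I_M)|$ holds because initial segments are nested and $N_1$ is monotone under inclusion. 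Combining this with the radius-one case applied to $S'$ gives
\[
|N_\rho(S)|=|N_1(S')|\ge|N_1(I_{|S'|})|\ge |N_1(I_{M'})|=|N_\rho(I_M)|.
\]

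\emph{The closure fact.} Here I would show directly that $N_1(I_M)$ is an initial segment of the simplicial order. If $I_M=B_{j}\cup A$, where $A$ is an initial part of weight-$(j+1)$ vectors, then $N_1(I_M)=B_{j+1}\cup \partial^+A$. Here $\partial^+A$ is the upper shadow of $A$ in level $j+2$. One then checks that the upper shadow of an initial segment of a level, in this order, is an initial segment of the next level. This is a Kruskal–Katona type statement, and it can be verified by comparing first differing coordinates.

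\emph{The radius-one inequality.} This is Harper's theorem itself, and I expect it to be the main obstacle. I would prove it by compression. For each coordinate $i$, define the down-compression $C_i$, which pushes vectors of $S$ toward having a $0$ in coordinate $i$ whenever the image is not already in $S$. The operation $C_i$ preserves $|S|$ and does not increase $|N_1(S)|$. After finitely many compressions we reach a down-set that is stable under all $C_i$. From there, one uses the stronger $i$-$j$ compressions, which replace $e_j$ by $e_i$ in supports for $i<j$, and these also do not increase the neighbourhood. The stable sets can then be compared with initial segments, by a final induction on $n$ that splits on the last coordinate: $S=S_0\cup S_1$ by the value of $x_n$, and the two halves are treated as sets in $\F_2^{n-1}$.

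The delicate part is verifying that each compression does not enlarge $N_1$, and that fully compressed sets can be handled inductively. For these steps I would follow Harper's original argument, as presented in Räty's Theorems~1 and~4, rather than reconstruct every detail.
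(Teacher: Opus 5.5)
Your proposal is correct modulo Harper's radius-one theorem, which you, like the paper, take from the literature. Where you differ is the passage from $\rho=1$ to general $\rho$. The paper applies R\"aty's Theorem~4 to $I_M$, which by Harper has a minimum-size $1$-neighbourhood. That theorem says \emph{every} set whose $1$-neighbourhood has minimum size also minimizes all of its $\rho$-neighbourhoods. You instead use the classical derivation. Since $N_1$ maps initial segments of the simplicial order to initial segments, $N_{\rho-1}(I_M)=I_{M'}$. The radius-one inequality then iterates through $N_\rho=N_1\circ N_{\rho-1}$ and the monotonicity of $M\mapsto|N_1(I_M)|$.

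The closure fact you need has a short proof. For a vector $y$ of weight $j+2$, let $\bar y$ be $y$ with its last $1$ replaced by $0$. Then $\bar y$ is the first weight-$(j+1)$ vector covered by $y$. Moreover, $y'\prec y$ of weight $j+2$ implies $\bar y'\preceq\bar y$. So if $y$ covers some $x\in L_a$, then $\bar y'\preceq\bar y\preceq x$. Hence $\bar y'\in L_a$, and $y'$ lies in the upper shadow.

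What each route buys: yours needs only Harper's theorem and this elementary lex-shadow fact. As a by-product it shows that $N_\rho(I_M)$ is itself an initial segment, consistent with the explicit description of $N_\rho(I_M)$ given after Theorem~\ref{thm:twoval}. The paper's route is a shorter citation, but of a stronger result than the lemma requires, since it holds for every minimizer and not only for initial segments.

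Your compression sketch for $\rho=1$ is only an outline. The real work there is handling sets that are compressed in every direction but are not initial segments. Since both you and the paper defer that step to Harper, this is not a gap relative to the paper.
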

For $\rho=1$, Lemma~\ref{lem:harper} is Harper's theorem~\cite{H1966},~\cite[Th.~1]{R2020}: no subset of $\F_2^n$ of size $M$ has a smaller neighbourhood than $I_M$. R\"aty~\cite[Th.~4]{R2020} showed that this minimality carries over to larger radius, in the sense that a set whose neighbourhood has minimal size among all sets of the same size also minimizes the size of its $\rho$-neighbourhood for every $\rho\ge 1$. Taking this set to be $I_M$ gives the lemma for all $\rho$.

\begin{theorem}\label{thm:twoval}
Let $f:\F_2^k\to\F_2$ be a non-constant function, let $t\ge 1$, and let $n\ge k$. A non-systematic $(f,t)$-FCC of length $n$ exists if and only if
\begin{equation}\label{eq:twoval}
|N_{2t}(I_M)|\le 2^n-2^k+M,
\end{equation}
where $I_M\subseteq\F_2^n$.
\end{theorem}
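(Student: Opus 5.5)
The plan is to reduce the existence question to a counting condition on two disjoint sets and then apply Lemma~\ref{lem:harper} with $\rho=2t$. Without loss of generality let $f^{-1}(0)$ be the smaller class, so $|f^{-1}(0)|=M$ and $|f^{-1}(1)|=2^k-M$. As noted before Lemma~\ref{lem:harper}, a one-to-one $\mathrm{Enc}$ satisfies~\eqref{eq:fcc} if and only if the images $S=\mathrm{Enc}(f^{-1}(0))$ and $T=\mathrm{Enc}(f^{-1}(1))$ satisfy $T\cap N_{2t}(S)=\emptyset$. Injectivity of $\mathrm{Enc}$ gives $|S|=M$ and $|T|=2^k-M$. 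Conversely, any pair of sets $S,T\subseteq\F_2^n$ with these sizes and $T\cap N_{2t}(S)=\emptyset$ yields an encoding: map $f^{-1}(0)$ bijectively onto $S$ and $f^{-1}(1)$ bijectively onto $T$. Since $S\subseteq N_{2t}(S)$, the sets $S$ and $T$ are disjoint, so the combined map is one-to-one.

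\emph{Necessity.} Given an FCC, $T\subseteq\F_2^n\setminus N_{2t}(S)$, so $2^k-M=|T|\le 2^n-|N_{2t}(S)|$. Lemma~\ref{lem:harper} gives $|N_{2t}(S)|\ge|N_{2t}(I_M)|$, and combining the two yields~\eqref{eq:twoval}.

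\emph{Sufficiency.} Take $S=I_M$, which is possible since $M\le 2^{k-1}\le 2^n$. The complement $\F_2^n\setminus N_{2t}(I_M)$ has size $2^n-|N_{2t}(I_M)|\ge 2^k-M$ by~\eqref{eq:twoval}, so we can choose $T$ inside it with $|T|=2^k-M$. The correspondence above then gives the encoding.

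There is one point to check: the choice of which class is the "small" one. Lemma~\ref{lem:harper} holds for any size, so one could equally place the larger class as $S$. The condition with $M$ is the right one because we may assign the smaller class to the initial segment. Necessity, however, must hold for whichever class plays the role of $S$. Here the symmetry of~\eqref{eq:fcc} helps: $T\cap N_{2t}(S)=\emptyset$ holds if and only if $S\cap N_{2t}(T)=\emptyset$, so in the necessity argument we are free to let $S$ be the image of the smaller class. Beyond this, no real obstacle arises; the substance is entirely in Lemma~\ref{lem:harper}.
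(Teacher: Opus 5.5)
Your proof is correct and matches the paper's argument. Necessity uses $T\subseteq\F_2^n\setminus N_{2t}(S)$ together with Lemma~\ref{lem:harper} for $\rho=2t$, and sufficiency maps the smaller class onto $I_M$ and the larger class into the complement of $N_{2t}(I_M)$. The closing remark on symmetry is harmless but not needed, since in the necessity direction you can simply define $S$ as the image of the smaller class.
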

\begin{proof}
Suppose that $\mathrm{Enc}$ is a non-systematic $(f,t)$-FCC of length $n$, and let $S$ and $T$ be the images of the classes of sizes $M$ and $2^k-M$. Since $T\cap N_{2t}(S)=\emptyset$, Lemma~\ref{lem:harper} gives
\[
2^k-M=|T|\le 2^n-|N_{2t}(S)|\le 2^n-|N_{2t}(I_M)|,
\]
which is~\eqref{eq:twoval}. Conversely, if~\eqref{eq:twoval} holds, map the class of size $M$ bijectively onto $I_M$ and the other class injectively into $\F_2^n\setminus N_{2t}(I_M)$, which has at least $2^k-M$ elements. The two images are disjoint and at distance at least $2t+1$, so the map is one-to-one and satisfies~\eqref{eq:fcc}.
\end{proof}

The left side of~\eqref{eq:twoval} can be computed from $M$ alone. Write $M=V(n,j)+a$ with $0\le a<\binom{n}{j+1}$. Then $I_M=B_j\cup L_a$, where $L_a$ consists of the first $a$ vectors of weight $j+1$ in the simplicial order. Every vector of weight at most $j+\rho$ is within distance $\rho$ of $B_j$. A vector $y$ of weight $j+\rho+1$ is at distance at least $\rho+1$ from $B_j$, and it is within distance $\rho$ of a vector $x\in L_a$ exactly when $y$ is obtained from $x$ by changing $\rho$ zeros into ones, that is, when $\mathrm{supp}(x)\subseteq\mathrm{supp}(y)$. Vectors of larger weight are at a distance more than $\rho$ from $I_M$. Hence
\[
N_\rho(I_M)=B_{j+\rho}\cup\big\{y\in\F_2^n : \wt(y)=j+\rho+1,\ \mathrm{supp}(x)\subseteq\mathrm{supp}(y)\text{ for some } x\in L_a\big\}.
\]

If $M=V(n,j)$, then $L_a=\emptyset$ and $N_\rho(I_M)=B_{j+\rho}$. This gives the following sufficient condition.

\begin{corollary}\label{cor:ball}
If $M\le V(n,j)$ and $2^k-M\le 2^n-V(n,j+2t)$ for some $j\ge 0$, then a non-systematic $(f,t)$-FCC of length $n$ exists.
\end{corollary}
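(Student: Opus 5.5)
The plan is to reduce Corollary~\ref{cor:ball} to Theorem~\ref{thm:twoval}. It suffices to verify~\eqref{eq:twoval}, that is, $|N_{2t}(I_M)|\le 2^n-2^k+M$.

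The first step is a monotonicity observation. For $M\le M'$, the initial segments are nested, $I_M\subseteq I_{M'}$, because both consist of the first vectors in the same simplicial order. Hence $N_{2t}(I_M)\subseteq N_{2t}(I_{M'})$. Taking $M'=V(n,j)$, we have $I_{M'}=B_j$. By the remark preceding the corollary, or directly by the triangle inequality since every vector of weight at most $j+2t$ is within distance $2t$ of $B_j$, this gives $N_{2t}(B_j)=B_{j+2t}$. Therefore
\[
|N_{2t}(I_M)|\le |B_{j+2t}|=V(n,j+2t).
\]
Here we interpret $V(n,i)=2^n$ when $i\ge n$; the hypothesis $M\le V(n,j)\le 2^n$ keeps $B_j$ meaningful.

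The second step rearranges the second hypothesis, $2^k-M\le 2^n-V(n,j+2t)$, into $V(n,j+2t)\le 2^n-2^k+M$. Combined with the first step, this yields~\eqref{eq:twoval}, and Theorem~\ref{thm:twoval} then gives the FCC. Note that $f$ is nonconstant, so $M\ge1$ and the theorem applies.

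As an alternative that avoids the theorem, one can construct the code directly. Map the class of size $M$ injectively into $B_j$, and map the other class injectively into the complement of $B_{j+2t}$, which has at least $2^k-M$ elements. Any two images from different classes then differ in weight by at least $2t+1$, hence are at distance at least $2t+1$, so~\eqref{eq:fcc} holds and the map is one-to-one.

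There is no real obstacle. The only point needing care is the nesting of initial segments together with the identity $N_{2t}(B_j)=B_{j+2t}$, including the edge case $j+2t\ge n$.
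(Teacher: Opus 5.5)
Your proof is correct, and your ``alternative'' direct construction is exactly the paper's proof: the paper maps the class of size $M$ into $B_j$ and the other class into $\F_2^n\setminus B_{j+2t}$, and notes that the two images differ in weight by at least $2t+1$. Your main route is a mild variant. You derive the corollary from the sufficiency half of Theorem~\ref{thm:twoval}, using $I_M\subseteq I_{V(n,j)}=B_j$ and $N_{2t}(B_j)\subseteq B_{j+2t}$. This shows that the corollary is just~\eqref{eq:twoval} with $N_{2t}(I_M)$ replaced by the larger set $B_{j+2t}$. It costs nothing extra, because that half of the theorem does not use Lemma~\ref{lem:harper}.

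If you keep that route, tighten three points:
\begin{itemize}
\item Theorem~\ref{thm:twoval} assumes $n\ge k$, which you did not check. It does follow from your hypotheses, since $M\le V(n,j)\le V(n,j+2t)$ gives $2^k\le 2^n-V(n,j+2t)+M\le 2^n$.
\item The triangle inequality gives the inclusion $N_{2t}(B_j)\subseteq B_{j+2t}$, and this is the only direction you need. Your phrase ``every vector of weight at most $j+2t$ is within distance $2t$ of $B_j$'' states the reverse inclusion.
\item The edge case $j+2t\ge n$ cannot occur. In that case $V(n,j+2t)=2^n$, and the second hypothesis would force $M\ge 2^k$, which is impossible for a nonconstant $f$.
\end{itemize}

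The direct construction avoids all of this. It needs neither the simplicial order nor the hypotheses of the theorem, which is presumably why the paper uses it.
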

\begin{proof}
By assumption, $|B_j|\ge M$ and $|\F_2^n\setminus B_{j+2t}|=2^n-V(n,j+2t)\ge 2^k-M$. Let $\mathrm{Enc}$ map the class of size $M$ injectively into $B_j$ and the other class injectively into $\F_2^n\setminus B_{j+2t}$. Since these two sets are disjoint, $\mathrm{Enc}$ is injective. If $f(u)\neq f(v)$, one of $\mathrm{Enc}(u)$ and $\mathrm{Enc}(v)$ has weight at most $j$ and the other has weight at least $j+2t+1$, so $d\big(\mathrm{Enc}(u),\mathrm{Enc}(v)\big)\ge 2t+1$.
\end{proof}

Examples \ref{ex:or} and \ref{ex:lin2} of Section~\ref{intro} are special cases of Corollary~\ref{cor:ball}. Example~\ref{ex:or} is the case $M=1$, $j=0$, where the condition becomes $2^k-1\le 2^n-V(n,2t)$, which for $k=4$, $t=1$, $n=5$ reads $15\le 16$. Example~\ref{ex:lin2} is the case $k=8$, $n=9$, $t=1$, $j=3$, with $M=128\le V(9,3)=130$ and $128\le 2^9-V(9,5)=130$.

We now take $t=1$ and determine when a single redundant symbol is sufficient. By the argument following Example~\ref{ex:or}, every non-systematic $(f,1)$-FCC has redundancy at least $1$. Theorem~\ref{thm:twoval} with $n=k+1$ gives the following.

\begin{corollary}\label{cor:r1}
Let $f:\F_2^k\to\F_2$ be nonconstant. A non-systematic $(f,1)$-FCC with redundancy $1$ exists if and only if $|N_2(I_M)|\le 2^k+M$, where $I_M\subseteq\F_2^{k+1}$.
\end{corollary}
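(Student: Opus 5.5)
This is the special case $n=k+1$, $t=1$ of Theorem~\ref{thm:twoval}, so I would simply specialize that theorem. The hypotheses match: $f$ is nonconstant, $t=1\ge 1$, and $n=k+1\ge k$. Since redundancy is $n-k$, a non-systematic $(f,1)$-FCC with redundancy $1$ is exactly a non-systematic $(f,1)$-FCC of length $n=k+1$.

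By Theorem~\ref{thm:twoval}, such a code exists if and only if
\[
|N_{2}(I_M)|\le 2^{k+1}-2^k+M ,
\]
where $I_M$ is taken in $\F_2^{k+1}$. The right side simplifies to $2^k+M$, which is the stated condition.

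The proof is therefore routine and has no real obstacle. The only care needed is to note that $I_M$ and the neighbourhood $N_2$ are taken in the ambient space $\F_2^{k+1}$, not $\F_2^k$. This matches the convention of Theorem~\ref{thm:twoval}, where $I_M\subseteq\F_2^n$.

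If one also wants to record that redundancy $1$ is then optimal, that follows from the remark after Example~\ref{ex:or}. For $n=k$ the encoding is a bijection onto $\F_2^k$, so two adjacent codewords carry different function values, and no non-systematic $(f,1)$-FCC of length $k$ exists.
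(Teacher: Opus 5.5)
Your proof is correct and follows the paper's own argument: the corollary is Theorem~\ref{thm:twoval} specialized to $t=1$ and $n=k+1$, with $2^{k+1}-2^k+M=2^k+M$ and $I_M$ taken in $\F_2^{k+1}$. Your added remark that redundancy $1$ is optimal is not needed for the equivalence; it matches the observation the paper makes after Example~\ref{ex:or}.
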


The condition depends on $f$ only through $M$, not on how the two classes are placed in $\F_2^k$. Table~\ref{tab:r1} lists the values of $M$ that satisfy it for $k\le 9$. The values were obtained by computing $|N_2(I_M)|$ for every $M\le 2^{k-1}$ as described in Appendix~\ref{app:k10}, and the row $k=7$ is worked out by hand in Appendix~\ref{app:k7}. The set of such $M$ is not an interval for $k=7,8,9$. For example, on $\F_2^7$ the class sizes $(36,92)$ allow redundancy $1$, by Corollary~\ref{cor:ball} with $j=2$, while $(30,98)$ do not. A nonzero linear $f$ has two classes of size $2^{k-1}$, so $M=2^{k-1}$. By Table~\ref{tab:r1}, redundancy $1$ is then possible if and only if $k\ge 8$, so Example~\ref{ex:lin2} is the smallest such case.

\begin{table}[t]
\centering
\caption{Values of $M\le 2^{k-1}$ for which a non-systematic $(f,1)$-FCC with redundancy $1$ exists}
\label{tab:r1}
\begin{tabular}{cl}
\hline
$k$ & $M$ \\
\hline
$2,3$ & none \\
$4$ & $1$ \\
$5$ & $1,2$ \\
$6$ & $1,\dots,8$ \\
$7$ & $1,\dots,22$ and $35,36,37$ \\
$8$ & $1,\dots,59$ and $70,\dots,74$ and $126,127,128$ \\
$9$ & $1,\dots,204$ and $252,\dots,256$ \\
$\ge 10$ & all $M$ \\
\hline
\end{tabular}
\end{table}

\begin{proposition}\label{prop:k10}
For $k\ge 10$, every nonconstant $f:\F_2^k\to\F_2$ has a non-systematic $(f,1)$-FCC with redundancy $1$, which is optimal.
\end{proposition}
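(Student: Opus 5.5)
The plan is to handle optimality and existence separately. Optimality is immediate from the argument following Example~\ref{ex:or}. If $n=k$, the encoding is a bijection onto $\F_2^k$, and since $f$ is nonconstant, two codewords at distance one carry different function values. So redundancy $0$ is impossible and redundancy $1$ is optimal.

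For existence, put $n=k+1$. By Corollary~\ref{cor:r1} it suffices to show $|N_2(I_M)|\le 2^k+M=2^{n-1}+M$ for every $1\le M\le 2^{k-1}=2^{n-2}$. Instead of computing the partial-layer shadow exactly, I would use monotonicity. For $M\le V(n,j)$ we have $I_M\subseteq I_{V(n,j)}=B_j$, hence $N_2(I_M)\subseteq B_{j+2}$ and $|N_2(I_M)|\le V(n,j+2)$. Fix $M$ and take the least $j$ with $M\le V(n,j)$, so that $M>V(n,j-1)$, with the convention $V(n,-1)=0$. It then suffices that
\[
V(n,j+2)-V(n,j-1)=\binom{n}{j}+\binom{n}{j+1}+\binom{n}{j+2}\le 2^{n-1}.
\]
This is essentially Corollary~\ref{cor:ball} applied with the smallest admissible ball. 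Each of the three binomial coefficients is at most $\binom{n}{\lfloor n/2\rfloor}\le 2^n\sqrt{2/(\pi n)}$, so the sum is at most $3\sqrt{2/\pi}\,2^n/\sqrt n$. This is $\le 2^{n-1}$ once $\sqrt n\ge 6\sqrt{2/\pi}\approx 4.79$, i.e.\ for $n\ge 23$, equivalently $k\ge 22$.

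The main obstacle is the range $10\le k\le 21$, where the crude uniform bound fails. First, I would sharpen it using the fact that only $j$ with $V(n,j-1)<2^{n-2}$ occur. Then $j-1$ lies strictly below the median, roughly $j\le n/2-c\sqrt n$, so the three coefficients are noticeably smaller than the central one. For each $n\in\{11,\dots,22\}$ I would tabulate the finitely many relevant $j$ and check the three-layer inequality directly. If the inequality fails for some $n$ and $j$ (I expect this near $n=11$, i.e.\ $k=10$, since $k=9$ is a genuine exception), I would drop the ball relaxation there. In its place I would use the exact formula $N_2(I_M)=B_{j+1}\cup\{\text{weight-}(j+2)\text{ supersets of } L_a\}$, and compute $|N_2(I_M)|$ for every $M$ in that interval as in Appendix~\ref{app:k10}. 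This confirms that $|N_2(I_M)|-M\le 2^{n-1}$ throughout.

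Combining the analytic bound for $k\ge 22$ with the finite check for $10\le k\le 21$ gives the condition of Corollary~\ref{cor:r1} for all $M$. By Theorem~\ref{thm:twoval}, this yields the required encoding of length $k+1$.
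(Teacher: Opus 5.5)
Your proposal is correct and follows the paper's proof. It uses the same reduction through Corollary~\ref{cor:r1}, the same relaxation $N_2(I_M)\subseteq B_{j+2}$ with the least admissible $j$ leading to the three-layer inequality, an analytic central-binomial estimate for $k\ge 22$, a finite computation for $10\le k\le 21$, and optimality from the argument after Example~\ref{ex:or}. The differences are minor: your Wallis-type bound $c_n\le 2^n\sqrt{2/(\pi n)}$ gives exactly $n\ge 23$, whereas the paper checks $3c_{23}\le 2^{22}$ and uses that $c_n/2^n$ is nonincreasing (its weaker Lemma~\ref{lem:central} would only reach $n\ge 24$); also, your pruning of the finite check by the three-layer inequality on the admissible $j$ already settles $k\ge 18$, so exact computation of $|N_2(I_M)|$ is needed only for $10\le k\le 17$.
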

\begin{proof}
By Corollary~\ref{cor:r1}, it is sufficient to show that
\begin{equation}\label{eq:r1cond}
|N_2(I_M)|\le 2^k+M
\end{equation}
for every $M\le 2^{k-1}$, where $I_M\subseteq\F_2^{k+1}$. For $10\le k\le 21$, we verified~\eqref{eq:r1cond} directly by computing $|N_2(I_M)|$ for every $M\le 2^{k-1}$; see Appendix~\ref{app:k10}.

Now let $k\ge 22$ and $n=k+1$, and let $j$ be the least integer with $M\le V(n,j)$, so that $M\ge V(n,j-1)+1$, where $V(n,-1)=0$. Since $I_M\subseteq B_j$, we have $N_2(I_M)\subseteq B_{j+2}$, and~\eqref{eq:r1cond} holds whenever $V(n,j+2)\le 2^k+M$. Since $M\ge V(n,j-1)+1$, this is implied by $V(n,j+2)-V(n,j-1)\le 2^k+1$, that is, by
\begin{equation}\label{eq:binom3}
\binom{k+1}{j}+\binom{k+1}{j+1}+\binom{k+1}{j+2}\le 2^k+1.
\end{equation}
The left side of~\eqref{eq:binom3} is at most $3\binom{k+1}{\lfloor (k+1)/2\rfloor}$, which is at most $2^k$ for $k\ge 22$, as shown in Appendix~\ref{app:k10}. Hence~\eqref{eq:binom3} holds for every $j$, and~\eqref{eq:r1cond} follows.
\end{proof}

\subsection{Redundancy Thresholds for General $t$}

Theorem~\ref{thm:twoval} and Corollary~\ref{cor:ball} hold for every $t\ge 1$, but Corollary~\ref{cor:r1}, Table~\ref{tab:r1} and Proposition~\ref{prop:k10} are for $t=1$ only. For $t=1$, Proposition~\ref{prop:k10} combines an analytic bound for $k\ge 22$ with a direct computation over all $M\le 2^{k-1}$ for $10\le k\le 21$. For larger $t$, the analytic bound applies only for larger $k$, and the direct computation over all $M$ is no longer feasible below that bound. In this subsection, we show that, for every $t$, redundancy $1$ is achievable for all nonconstant two-valued functions once $k$ is large enough. For smaller $k$ we introduce, for each redundancy $s\le 2t$, the least $\kappa$ such that, for every $k\ge\kappa$, redundancy $s$ suffices for all such functions, and we give bounds on it. We use the following bound on the central binomial coefficient.

\begin{lemma}\label{lem:central}
Let $c_n$ denote the highest term in the binomial expansion of $2^n$, i.e., $c_n=\binom{n}{\lfloor n/2\rfloor}$. For every $n\ge 1$,
$c_n\le 2^n\sqrt{2/(3n)}$.
\end{lemma}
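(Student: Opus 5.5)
We need to prove $c_n\le 2^n\sqrt{2/(3n)}$ for all $n\ge 1$, and the plan is a ratio argument. The idea is to show that the normalized quantity $a_n = c_n\sqrt{n}/2^n$ has $\sup_n a_n^2 \le 2/3$. The supremum should occur at small $n$: $a_1^2 = 1/4$ and $a_2^2 = 2\cdot(2/4)^2 = 1/2$. Since $a_n^2 \to 2/\pi \approx 0.637 < 2/3$, we cannot simply argue that $a_n$ increases toward its limit and then bound the limit. Instead we split by parity and track the even and odd subsequences separately.

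\emph{Even case, $n=2m$.} Here $c_{2m}/4^m=\prod_{i=1}^m \frac{2i-1}{2i}$. We prove $c_{2m}/4^m\le 1/\sqrt{3m+1}$ by induction, following the classical Wallis-type estimate.
\begin{itemize}
\item The base case $m=1$ gives $1/2\le 1/2$.
\item For the inductive step we need $\frac{2m+1}{2m+2}\le\sqrt{\frac{3m+1}{3m+4}}$. After squaring, this becomes $(2m+1)^2(3m+4)\le(2m+2)^2(3m+1)$. Expanding both sides gives $12m^3+28m^2+19m+4\le 12m^3+28m^2+20m+4$, which holds since it reduces to $0\le m$.
\end{itemize}
Then $1/\sqrt{3m+1}\le 1/\sqrt{3m}=\sqrt{2/(3n)}$, which settles even $n$.

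\emph{Odd case, $n=2m+1$ with $m\ge 0$.} We use $c_{2m+1}=\binom{2m+1}{m}=\tfrac12\binom{2m+2}{m+1}=\tfrac12 c_{2m+2}$. This gives $c_{2m+1}/2^{2m+1}=c_{2m+2}/2^{2m+2}$, which by the even bound is at most $1/\sqrt{3(m+1)+1}=1/\sqrt{3m+4}$. We need this to be at most $\sqrt{2/(3(2m+1))}$, that is, $3(2m+1)\le 2(3m+4)$, or $6m+3\le 6m+8$. This is true.

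The only delicate step is choosing the right strengthened inductive hypothesis. The bound $1/\sqrt{3m}$ itself does not induct, because the ratio step fails, so we use $1/\sqrt{3m+1}$, which inducts exactly as checked above. Everything else is routine algebra.
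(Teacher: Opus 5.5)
Your proof is correct and follows essentially the same route as the paper: you prove the strengthened bound $\binom{2m}{m}\le 4^m/\sqrt{3m+1}$ by induction using the same cubic comparison ($19m$ versus $20m$), then handle odd $n$ through the identity $c_{2m+1}=\tfrac12 c_{2m+2}$. Your explicit check of the case $m=0$ (that is, $n=1$) and your remark on why $1/\sqrt{3m}$ alone does not survive the induction are both accurate.
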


\begin{proof}
The proof is given in the Appendix \ref{app:C}.
\end{proof}

\begin{theorem}\label{thm:larget}
Let $t\ge 1$ and
\[
k\ \ge\ \Big\lceil \tfrac{8}{3}(2t+1)^2\Big\rceil-1 .
\]
Then every nonconstant $f:\mathbb{F}_2^k\to \mathbb{F}_2$ has a non-systematic $(f,t)$-FCC of length $n=k+1$. The redundancy $n-k=1$ is
optimal.
\end{theorem}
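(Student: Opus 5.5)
Following the proof of Proposition~\ref{prop:k10}, we work with $n=k+1$ and aim to verify the condition of Theorem~\ref{thm:twoval},
\[
|N_{2t}(I_M)|\le 2^{k+1}-2^k+M=2^k+M ,
\]
for every $M\le 2^{k-1}$, where $I_M\subseteq\F_2^{k+1}$.

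Let $j$ be the least integer with $M\le V(n,j)$, so that $M\ge V(n,j-1)+1$. Since $I_M\subseteq B_j$, we have $N_{2t}(I_M)\subseteq B_{j+2t}$. It therefore suffices to show $V(n,j+2t)\le 2^k+M$. Because $M\ge V(n,j-1)+1$, this follows from
\[
V(n,j+2t)-V(n,j-1)=\sum_{i=j}^{j+2t}\binom{n}{i}\le 2^k+1 .
\]
If $j+2t>n$, the terms with $i>n$ vanish, so the sum only gets smaller and the same bound applies.

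The left side is a sum of $2t+1$ binomial coefficients, each at most $c_n$. Lemma~\ref{lem:central} then bounds it by
\[
(2t+1)\,2^n\sqrt{\tfrac{2}{3n}}=2^k\cdot 2(2t+1)\sqrt{\tfrac{2}{3n}} .
\]
So it is enough that $4(2t+1)^2\cdot\tfrac{2}{3n}\le 1$, that is, $n\ge \tfrac{8}{3}(2t+1)^2$. This is exactly $k+1\ge\lceil\tfrac83(2t+1)^2\rceil$, the hypothesis of the theorem. This step is where the constant $\tfrac83$ comes from, and the only care needed is checking that the ceiling and the $-1$ in the hypothesis match $n=k+1$. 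No case split over $M$ is needed, since the bound holds uniformly in $j$.

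Existence then follows from Theorem~\ref{thm:twoval}. Alternatively, and more directly, Corollary~\ref{cor:ball} applies with this $j$: the condition $2^k-M\le 2^n-V(n,j+2t)$ is the same inequality as above. This avoids needing the exact form of $N_{2t}(I_M)$.

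For optimality, I would repeat the argument following Example~\ref{ex:or}. If $n=k$, the encoding is a bijection onto $\F_2^k$, and since $f$ is nonconstant, two codewords at distance $1<2t+1$ carry different function values. Hence the redundancy is at least $1$, and redundancy $1$ is optimal.

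I expect the main obstacle to be the constant bookkeeping: confirming that Lemma~\ref{lem:central}, applied to $n=k+1$ rather than $k$, gives exactly the stated threshold, including the boundary case $n=\tfrac83(2t+1)^2$ when this is an integer. In that boundary case the sum bound equals $2^k$ exactly, which is still $\le 2^k+1$, so the argument goes through.
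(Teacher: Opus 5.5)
Your proposal is correct and is essentially the paper's proof. With $n=k+1$ and $j$ minimal such that $V(n,j)\ge M$, you bound $|N_{2t}(I_M)|\le V(n,j+2t)\le M-1+(2t+1)c_n$, apply Lemma~\ref{lem:central} to reduce to $n\ge\frac83(2t+1)^2$, and get optimality from the bijection argument for $n=k$; your remark that Corollary~\ref{cor:ball} can replace Theorem~\ref{thm:twoval} is also valid, since the argument only uses the inclusion $N_{2t}(I_M)\subseteq B_{j+2t}$.
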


\begin{proof}
Let $M\le 2^{k-1}$ be the size of the smaller class and put $n=k+1$. By Theorem~\ref{thm:twoval} it suffices to show
\begin{equation}\label{eq:largetcond}
|N_{2t}(I_M)|\le 2^n-2^k+M=2^k+M .
\end{equation}
Let $j\ge 0$ be the least integer with $V(n,j)\ge M$, and set $V(n,-1)=0$. Then $M\ge V(n,j-1)+1$. Since the simplicial order lists
words by weight, $I_M\subseteq B_j$, and hence
$N_{2t}(I_M)\subseteq N_{2t}(B_j)=B_{j+2t}$. Therefore,
\[
|N_{2t}(I_M)|\le V(n,j+2t)
= V(n,j-1)+\sum_{i=j}^{j+2t}\binom{n}{i}
\le M-1+(2t+1)\,c_n ,
\]
with $\binom{n}{i}=0$ for $i>n$. So \eqref{eq:largetcond} holds as soon
as $(2t+1)c_n -1 \le 2^k=2^{n-1}$, which is equivalent to $c_n \leq \frac{2^{n-1}+1}{2t+1}$. 
By Lemma~\ref{lem:central}, $c_n\le 2^n\sqrt{2/(3n)}$, so it suffices
that
\[
2^n\sqrt{\frac{2}{3n}}\ \le\ \frac{2^{n-1}}{2t+1}.
\]
This is equivalent to $2(2t+1)\sqrt{2/(3n)}\le 1$, that is,
$n\ge\frac83(2t+1)^2$. As $n=k+1$, the previous condition is equivalent to $k\ge\lceil\frac83(2t+1)^2\rceil-1$. Optimality of redundancy $1$ follows from $n-k\ge 1$, established after Example~\ref{ex:or}.
\end{proof}

Theorem~\ref{thm:larget} shows that for every $t\ge 1$ redundancy $1$ is
achievable for all nonconstant two-valued functions when $k$ is large
enough. The bound on $k$ in the theorem is not sharp. For $t=1$ the exact
value follows from Proposition~\ref{prop:k10}, and for $t\ge 2$ it is
unknown. For smaller $k$, some functions require redundancy larger than
$1$. However, redundancy $2t$ is always sufficient: for every two-valued
function, $\mathrm{Enc}(u)=(u,f(u)^{2t})$, where $f(u)^{2t}$ denotes the
value $f(u)$ repeated $2t$ times, is a systematic $(f,t)$-FCC with
redundancy $2t$, since two messages with different function values are
at distance at least $1$ in the systematic part and at distance $2t$ in
the parity part, and this redundancy is optimal among systematic
encodings by~\cite[Cor.~1]{LBWY2023}. A systematic encoding
is in particular a non-systematic one, so for every $k$ and every
nonconstant two-valued $f$ there is a non-systematic $(f,t)$-FCC of length
$k+2t$.

\begin{definition}\label{def:threshold}
Let $t\ge 1$ and $1\le s\le 2t$. The \emph{redundancy-$s$ threshold}
$\kappa_s(t)$ is the least positive integer $\kappa$ such that, for every
$k\ge\kappa$, every nonconstant function $f:\mathbb{F}_2^k\to \mathbb{F}_2$
admits a non-systematic $(f,t)$-FCC of length $k+s$.
\end{definition}

Since every nonconstant two-valued function on $\mathbb{F}_2^k$ admits a
non-systematic $(f,t)$-FCC of length $k+2t$ for every $k\ge 1$, as shown
above, we have $\kappa_{2t}(t)=1$. Appending a zero coordinate
to every codeword of an $(f,t)$-FCC of length $n$ gives an $(f,t)$-FCC of
length $n+1$, so $\kappa_s(t)$ exists for every $1\le s\le 2t$ and
\[
\kappa_1(t)\ \ge\ \kappa_2(t)\ \ge\ \cdots\ \ge\ \kappa_{2t}(t)=1 .
\]
In particular, if $\kappa_{s+1}(t)\le k<\kappa_s(t)$ for some
$1\le s\le 2t-1$, every two-valued function on $\mathbb{F}_2^k$ has a
non-systematic encoding with redundancy at most $s+1$.

A lower bound on $\kappa_1(t)$ comes from a function whose smaller class
has a single element, such as the OR function. For $M=1$, we have $I_1=\{0\}$ and
$N_{2t}(I_1)=B_{2t}$, so with $n=k+1$, Theorem \ref{thm:twoval} gives $V(k+1,2t)\le 2^k+1$ or equivalently $V(n,2t)\leq 2^{n-1}+1$. Suppose $k\le 4t-2$.
Then $n\le 4t-1$, so $2t\ge\lfloor n/2\rfloor+1$. Since
$V(n,\lfloor n/2\rfloor)\ge 2^{n-1}$, with equality only for odd $n$, and
$\binom{n}{\lfloor n/2\rfloor+1}\ge 1$, with strict inequality for odd
$n\ge 3$,
\begin{align*}
V(n,2t)\ &\geq  V(n,\lfloor n/2\rfloor+1) \qquad \big(\text{as }2t\ge\lfloor n/2\rfloor+1 \big) \\
&=\ V(n,\lfloor n/2\rfloor)+\binom{n}{\lfloor n/2\rfloor+1}\\
& >\ 2^{n-1}+1.
\end{align*}
Redundancy $1$ is therefore impossible for this function when
$k\le 4t-2$, and, together with Theorem~\ref{thm:larget},
\[
4t-1\ \le\ \kappa_1(t)\ \le\ \Big\lceil \tfrac{8}{3}(2t+1)^2\Big\rceil-1 .
\]
For $t=1$ these bounds give $3\le\kappa_1(1)\le 23$, while
Proposition~\ref{prop:k10} and Table~\ref{tab:r1} give
$\kappa_1(1)=10$. The lower bound is linear in $t$ and the upper bound
quadratic, and the exact growth of $\kappa_1(t)$ remains open.

The next theorem extends the bound of Theorem~\ref{thm:larget} from redundancy $1$ to every redundancy $s<2t$.

\begin{theorem}\label{thm:kappa}
Let $t\ge 1$ and $1\le s\le 2t-1$. Then
\[
\kappa_s(t)\ \le\ \left\lceil\frac{2(2t+1)^2}{3\,(1-2^{-s})^2}\right\rceil-s .
\]
\end{theorem}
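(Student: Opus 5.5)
We follow the proof of Theorem~\ref{thm:larget}, with $n=k+s$ in place of $n=k+1$. Let $M\le 2^{k-1}$ be the size of the smaller class. By Theorem~\ref{thm:twoval}, a non-systematic $(f,t)$-FCC of length $n$ exists as soon as
\[
|N_{2t}(I_M)|\le 2^n-2^k+M=2^n(1-2^{-s})+M .
\]
As before, let $j\ge 0$ be the least integer with $V(n,j)\ge M$, so that $M\ge V(n,j-1)+1$ and $I_M\subseteq B_j$. Then $N_{2t}(I_M)\subseteq B_{j+2t}$. Bounding each of the $2t+1$ binomial coefficients $\binom{n}{i}$, $j\le i\le j+2t$, by $c_n$ gives
\[
|N_{2t}(I_M)|\le V(n,j-1)+(2t+1)c_n\le M-1+(2t+1)c_n .
\]
So it suffices that $(2t+1)c_n-1\le 2^n(1-2^{-s})$, and in particular that $(2t+1)c_n\le 2^n(1-2^{-s})$.

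Next we apply Lemma~\ref{lem:central}, which gives $c_n\le 2^n\sqrt{2/(3n)}$. The sufficient condition then becomes
\[
(2t+1)\sqrt{2/(3n)}\le 1-2^{-s},
\]
that is,
\[
n\ge \frac{2(2t+1)^2}{3(1-2^{-s})^2}.
\]
With $n=k+s$ and $k$ an integer, this holds for every
\[
k\ge \left\lceil \frac{2(2t+1)^2}{3(1-2^{-s})^2}\right\rceil-s .
\]
For all such $k$ and every nonconstant $f$, a non-systematic $(f,t)$-FCC of length $k+s$ exists, which by Definition~\ref{def:threshold} gives the stated bound on $\kappa_s(t)$.

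The step that needs care is that $\kappa_s(t)$ is defined as a least positive integer. If the right-hand side is at most $0$, we must check that the bound is still meaningful. Since $(1-2^{-s})^2<1$ and $t\ge1$, the fraction exceeds $6$, and $s\le 2t-1$ makes it large compared with $s$, so the right side is positive; I expect this check to be routine. The argument is also monotone in $k$ as required, because the condition is verified for every $k$ beyond the bound, not just one.

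As a consistency check, setting $s=1$ gives $(1-2^{-1})^2=1/4$, so the bound becomes $\lceil\frac83(2t+1)^2\rceil-1$. This recovers Theorem~\ref{thm:larget}.
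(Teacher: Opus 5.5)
Your proof is correct and follows the paper's argument essentially verbatim. You bound $|N_{2t}(I_M)|\le M-1+(2t+1)c_n$ as in Theorem~\ref{thm:larget}, apply Theorem~\ref{thm:twoval} with $n=k+s$, drop the $-1$, and invoke Lemma~\ref{lem:central}. Your extra positivity check on the right-hand side, which the paper omits, is indeed routine, since $\frac{2(2t+1)^2}{3(1-2^{-s})^2}>\frac{2(2t+1)^2}{3}>2t-1\ge s$.
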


\begin{proof}
Let $k$ be at least the right-hand side, let
$f:\mathbb{F}_2^k\to\mathbb{F}_2$ be nonconstant with the smaller class of size $M$, and let $n=k+s$. The argument in the proof of Theorem~\ref{thm:larget} gives $|N_{2t}(I_M)|\le M-1+(2t+1)c_n$. By
Theorem~\ref{thm:twoval}, a non-systematic $(f,t)$-FCC of length $n$
exists if $(2t+1)c_n-1\le 2^n-2^k=2^n(1-2^{-s})$, which is equivalent to 
$$c_n \leq \frac{2^n(1-2^{-s})+1}{2t+1}.$$
By Lemma~\ref{lem:central}, this is implied by $2^n\sqrt{2/(3n)} \leq \frac{2^n(1-2^{-s})}{2t+1}$, 
that is,
$n\ge\frac{2(2t+1)^2}{3(1-2^{-s})^2}$. As $n=k+s$, the previous condition is equivalent to $k\ge\frac{2(2t+1)^2}{3(1-2^{-s})^2}-s$.
Since $k$ was an arbitrary integer with
$k\ge\big\lceil\frac{2(2t+1)^2}{3(1-2^{-s})^2}\big\rceil-s$,
Definition~\ref{def:threshold} gives the claim.
\end{proof}

For $s=1$, Theorem~\ref{thm:kappa} gives
$\kappa_1(t)\le\lceil\frac83(2t+1)^2\rceil-1$, which is the bound of
Theorem~\ref{thm:larget}. For $t=2$ it gives $\kappa_1(2)\le 66$ and
$\kappa_2(2)\le 28$. Lower bounds on $\kappa_s(t)$ for $2\le s\le 2t-1$
remain open.
}

\section{Conclusion}
\label{conclusion}

For a fixed linear code $C$ and a linear function $f$, the generator matrices of $C$ are not equivalent as function-correcting codes. We showed that the largest attainable separation $d_f^{*}$ equals $\wt(b_{r+1})$, where $b_1,\dots,b_k$ is the greedy basis of $C$ and $r=\dim\ker f$, and that it is attained by any generator matrix whose kernel subcode is $\langle b_1,\dots,b_r\rangle$. Whether $d_f^{*}$ can be attained in systematic form depends on the existence of an information set $\mathcal{I}$ with $W|_{\mathcal{I}}\subseteq\ker f$, where $W$ is spanned by the codewords of weight less than $d_f^{*}$, and a necessary condition compares the weights of these codewords with the minimum weight of $\ker f$.

\arxiv{For two-valued functions, the existence of a non-systematic $(f,t)$-FCC of a given length reduces to the size of the radius-$2t$ neighbourhood of an initial segment of the simplicial order, and so depends on $f$ only through the size of its smaller preimage. For $t=1$, this determines the optimal redundancy for every $k$, with redundancy $1$ sufficient for all $k\ge 10$. For general $t$ we obtained the upper bound $\kappa_s(t)\le\lceil 2(2t+1)^2/(3(1-2^{-s})^2)\rceil-s$ on the redundancy-$s$ threshold and the lower bound $\kappa_1(t)\ge 4t-1$. The gap between linear and quadratic growth in $t$ is open, as are lower bounds on $\kappa_s(t)$ for $s\ge 2$.}
\section{Acknowledgement}
C. Rajput thanks the University of Bristol Mumbai Enterprise Campus for its support.

\arxiv{
\appendices

\section{The Case $k=7$ in Table~\ref{tab:r1}}\label{app:k7}

In this appendix we verify the row $k=7$ of Table~\ref{tab:r1}. Here $n=8$, and by Corollary~\ref{cor:r1} a class size $M\le 64$ allows redundancy $1$ if and only if
\begin{equation}\label{eq:k7}
|N_2(I_M)|\le 128+M,
\end{equation}
where $I_M\subseteq\F_2^8$. We use $V(8,1)=9$, $V(8,2)=37$, $V(8,3)=93$ and $V(8,4)=163$.

We say that a vector $y$ covers a vector $x$ if $\mathrm{supp}(x)\subseteq\mathrm{supp}(y)$. Write $M=V(8,j)+a$ with $0\le a<\binom{8}{j+1}$, so that $I_M=B_j\cup L_a$, where $L_a$ consists of the first $a$ vectors of weight $j+1$ in the simplicial order. As shown after Theorem~\ref{thm:twoval}, $N_2(I_M)$ is the union of $B_{j+2}$ and the set of vectors of weight $j+3$ that cover some vector of $L_a$. Let $s_j(a)$ denote the number of such vectors of weight $j+3$. Then
\[
|N_2(I_M)|=V(8,j+2)+s_j(a).
\]
We identify vectors with their supports and list the vectors of weight $j+1$ in lex order. Table~\ref{tab:k7} gives $|N_2(I_M)|$ for selected values of $M$.

\begin{table}[t]
\centering
\caption{The condition $|N_2(I_M)|\le 128+M$ for $k=7$ and selected $M$}
\label{tab:k7}
\begin{tabular}{cccc}
\hline
$M$ & $|N_2(I_M)|$ & $128+M$ & Redundancy $1$ \\
\hline
$1$ & $37$ & $129$ & yes \\
$9$ & $93$ & $137$ & yes \\
$16$ & $128$ & $144$ & yes \\
$22$ & $148$ & $150$ & yes \\
$23$ & $154$ & $151$ & no \\
$30$ & $162$ & $158$ & no \\
$34$ & $163$ & $162$ & no \\
$35$ & $163$ & $163$ & yes \\
$36$ & $163$ & $164$ & yes \\
$37$ & $163$ & $165$ & yes \\
$38$ & $173$ & $166$ & no \\
$64$ & $211$ & $192$ & no \\
\hline
\end{tabular}
\end{table}

\emph{Case $1\le M\le 9$.} Here $I_M\subseteq B_1$, so $N_2(I_M)\subseteq B_3$ and $|N_2(I_M)|\le 93<128+M$. Condition~\eqref{eq:k7} holds.

\emph{Case $10\le M\le 22$.} Here $j=1$ and $1\le a\le 13$, so $L_a$ consists of pairs. In lex order, the first seven pairs are $\{1,2\},\dots,\{1,8\}$, and the next six are $\{2,3\},\dots,\{2,8\}$. A $4$-set covers one of the first $a\le 7$ pairs if and only if it contains $1$ and meets $\{2,\dots,a+1\}$. For $a=7+b$ with $1\le b\le 6$, it covers one of the first $a$ pairs if and only if it contains $1$, or it avoids $1$, contains $2$ and meets $\{3,\dots,b+2\}$. Hence
\[
s_1(a)=
\begin{cases}
35-\binom{7-a}{3}, & 1\le a\le 7,\\[2pt]
55-\binom{6-b}{3}, & a=7+b,\ 1\le b\le 6.
\end{cases}
\]
Condition~\eqref{eq:k7} reads $93+s_1(a)\le 137+a$, that is, $s_1(a)\le 44+a$. For $a\le 7$ this holds since $s_1(a)\le 35$. For $a=7+b$ it becomes $\binom{6-b}{3}\ge 4-b$, which holds for every $1\le b\le 6$, with equality at $b=3$ and $b=4$, that is, $M=19$ and $M=20$. All the covered $4$-sets contain $1$ or $2$, and there are only $55$ such sets, so the neighbourhood grows slowly in this range.

\emph{Case $23\le M\le 34$.} Here $j=1$ and $14\le a\le 25$. The pairs that follow are $\{3,4\},\dots,\{3,8\}$, then $\{4,5\},\dots,\{4,8\}$, then $\{5,6\},\{5,7\},\{5,8\}$. Each new pair covers $4$-sets avoiding the coordinates already used, which were not covered before. For $a=13+c$ with $1\le c\le 5$, for $a=18+d$ with $1\le d\le 4$, and for $a=22+e$ with $1\le e\le 3$, the same counting gives
\[
s_1(a)=65-\binom{5-c}{3},\qquad s_1(a)=69-\binom{4-d}{3},\qquad s_1(a)=70-\binom{3-e}{3},
\]
respectively. In each range $s_1(a)>44+a$. For example, at $M=23$ the pair $\{3,4\}$ adds the $\binom{4}{2}=6$ sets $\{3,4,x,y\}$ with $x,y\in\{5,\dots,8\}$, while the right side of~\eqref{eq:k7} increases by $1$. At $M=34$, all $70$ sets of size $4$ are covered and $|N_2(I_{34})|=163>162$. Condition~\eqref{eq:k7} fails throughout.

\emph{Case $M\in\{35,36,37\}$.} For $M=35,36$, all $4$-sets are still covered, so $N_2(I_M)=B_4$ and $|N_2(I_M)|=163$. The right side of~\eqref{eq:k7} is $163$ and $164$, so the condition holds. For $M=37$, $I_{37}=B_2$ and $N_2(I_{37})=B_4$, so $163\le 165$. This case is Corollary~\ref{cor:ball} with $j=2$: the class of size $37$ is mapped into the vectors of weight at most $2$, and the class of size $91$ into the $93$ vectors of weight at least $5$.

\emph{Case $38\le M\le 64$.} Here $j=2$ and $1\le a\le 27$, and condition~\eqref{eq:k7} reads $163+s_2(a)\le 165+a$, that is, $s_2(a)\le a+2$. Since $L_a$ grows with $a$, so does $s_2(a)$. The first $3$-set $\{1,2,3\}$ is covered by $\binom{5}{2}=10$ sets of size $5$, so $s_2(a)\ge 10>a+2$ for $a\le 6$. The first six $3$-sets are $\{1,2,x\}$ with $3\le x\le 8$, and every $5$-set containing $\{1,2\}$ covers one of them, so $s_2(a)\ge\binom{6}{3}=20>a+2$ for $7\le a\le 17$. The first $18$ $3$-sets are those containing $1$ and some element of $\{2,3,4,5\}$. Every $5$-set containing $1$ has four further elements, at most three of which lie in $\{6,7,8\}$, so it covers one of these $3$-sets. Hence $s_2(a)\ge\binom{7}{4}=35>a+2$ for $18\le a\le 27$. Condition~\eqref{eq:k7} fails throughout. In particular, it fails for $M=64$, so no nonzero linear function on $\F_2^7$ admits a non-systematic $(f,1)$-FCC with redundancy $1$.

Together, the cases show that redundancy $1$ is possible for $k=7$ exactly when $M\in\{1,\dots,22\}\cup\{35,36,37\}$. The gap $23\le M\le 34$ arises from the order in which the second layer fills. While the added pairs share coordinate $1$ or $2$, the $4$-sets they cover overlap heavily. Once the order reaches pairs on new coordinates, each pair covers several new $4$-sets, and the bound catches up only after the whole layer of weight $4$ is covered.

\section{Details for the Proof of Proposition~\ref{prop:k10}}\label{app:k10}

Throughout this appendix, $n=k+1$ and
\[
S_n(j)=\binom{n}{j}+\binom{n}{j+1}+\binom{n}{j+2},
\]
so that~\eqref{eq:binom3} reads $S_n(j)\le 2^k+1$.

\subsection{The case $k\ge 22$}

For every $0\le i\le n$ we have $\binom{n}{i}\le c_n$, where $c_n=\binom{n}{\lfloor n/2\rfloor}$ is the central binomial coefficient. Applying this to each of the three terms of $S_n(j)$ gives $S_n(j)\le 3c_n$ for every $j$. By Pascal's rule,
\[
c_{n+1}=\binom{n}{\lfloor (n+1)/2\rfloor}+\binom{n}{\lfloor (n+1)/2\rfloor-1}\le 2c_n,
\]
and dividing by $2^{n+1}$ gives $c_{n+1}/2^{n+1}\le c_n/2^n$. Hence $c_n/2^n\le c_{23}/2^{23}$ for every $n\ge 23$. For $n=23$,
\[
3c_{23}=3\binom{23}{11}=4056234\le 4194304=2^{22},
\]
that is, $3c_{23}/2^{23}\le 1/2$. Therefore $3c_n\le 2^{n-1}=2^k$ for every $n\ge 23$, and since $S_n(j)\le 3c_n$, inequality~\eqref{eq:binom3} holds for every $j$ when $k\ge 22$.

\subsection{The case $10\le k\le 21$}

For each $10\le k\le 21$, we computed $|N_2(I_M)|$ for every $M\le 2^{k-1}$ by listing the vectors of $\F_2^{n}$ in the simplicial order and adding the radius-$2$ ball of each new vector to the neighbourhood. In every case $|N_2(I_M)|\le 2^k+M$, so condition~\eqref{eq:r1cond} holds. Table~\ref{tab:k10} lists, for each $k$, the value of $M$ for which $2^k+M-|N_2(I_M)|$ is smallest. The same computation for $k\le 9$ gives the rows $k\le 9$ of Table~\ref{tab:r1}.

\begin{table}[t]
\centering
\caption{The closest case of $|N_2(I_M)|\le 2^k+M$ for $10\le k\le 21$}
\label{tab:k10}
\begin{tabular}{cccc}
\hline
$k$ & $M$ & $|N_2(I_M)|$ & $2^k+M$ \\
\hline
$10$ & $388$ & $1375$ & $1412$ \\
$11$ & $950$ & $2861$ & $2998$ \\
$12$ & $1744$ & $5371$ & $5840$ \\
$13$ & $4096$ & $11129$ & $12288$ \\
$14$ & $7597$ & $21091$ & $23981$ \\
$15$ & $16259$ & $42194$ & $49027$ \\
$16$ & $32439$ & $83113$ & $97975$ \\
$17$ & $65371$ & $161376$ & $196443$ \\
$18$ & $131070$ & $320931$ & $393214$ \\
$19$ & $230853$ & $590485$ & $755141$ \\
$20$ & $494803$ & $1207151$ & $1543379$ \\
$21$ & $972315$ & $2348105$ & $3069467$ \\
\hline
\end{tabular}
\end{table}

\section{Proof of Lemma \ref{lem:central}} \label{app:C}

We first show that
\begin{equation}\label{eq:central-even}
\binom{2m}{m}\le\frac{4^m}{\sqrt{3m+1}}\qquad\text{for all } m\ge 1
\end{equation}
by induction on $m$. For $m=1$, both sides equal $2$. Assume
\eqref{eq:central-even} holds for $m=\ell$, where $\ell\ge 1$. Then
\[
\binom{2\ell+2}{\ell+1}=\frac{(2\ell+2)!}{(\ell+1)!\,(\ell+1)!}
=\frac{(2\ell+2)(2\ell+1)}{(\ell+1)^2}\binom{2\ell}{\ell}
=\frac{2(2\ell+1)}{\ell+1}\binom{2\ell}{\ell}
\le\frac{2(2\ell+1)}{\ell+1}\cdot\frac{4^\ell}{\sqrt{3\ell+1}} .
\]
To prove \eqref{eq:central-even} for $m=\ell+1$, it remains to show
that the right-hand side is at most $4^{\ell+1}/\sqrt{3\ell+4}$, that is,
\[
\frac{2\ell+1}{\ell+1}\ \le\ 2\sqrt{\frac{3\ell+1}{3\ell+4}} .
\]
Squaring both sides, this is equivalent to
$(2\ell+1)^2(3\ell+4)\le 4(\ell+1)^2(3\ell+1)$. Expanding, the left-hand
side is $12\ell^3+28\ell^2+19\ell+4$ and the right-hand side is
$12\ell^3+28\ell^2+20\ell+4$, so the difference is $\ell\ge 0$. This
proves \eqref{eq:central-even}.

Let $n=2m$ be even. Then by \eqref{eq:central-even},
\[
c_n=\binom{2m}{m}\le\frac{4^m}{\sqrt{3m+1}}<\frac{2^n}{\sqrt{3n/2}}
=2^n\sqrt{\frac{2}{3n}} .
\]
Let $n=2m+1$ be odd. Since
$\binom{2m+2}{m+1}=\binom{2m+1}{m}+\binom{2m+1}{m+1}=2\binom{2m+1}{m}$,
applying \eqref{eq:central-even} with $m+1$ in place of $m$ gives
\[
c_n=\binom{2m+1}{m}=\frac12\binom{2m+2}{m+1}
\le\frac{2^{2m+1}}{\sqrt{3m+4}}
=\frac{2^n}{\sqrt{(3n+5)/2}}
<2^n\sqrt{\frac{2}{3n}} .
\]
}

\end{document}